\newtheorem{definition}{Definition}
\newtheorem{rmk}{Remark}
\newtheorem{thm}{Theorem}
\newtheorem{lemma}{Lemma}
\newtheorem{assumption}{Assumption}
\newtheorem{proposition}{Proposition}
\newenvironment{proof}{\hspace{2ex}\emph{Proof}.\hspace{1ex}}{\hfill$\Box$}
\begin{document}

\title{Mobile Formation Coordination and Tracking Control for Multiple Non-holonomic Vehicles}

 \author{~Xiuhui~Peng,~Zhiyong~Sun,~\IEEEmembership{Member,~IEEE}, ~Kexin~Guo and~Zhiyong~Geng

 \thanks{This work is supported by the National Natural Science Foundation(NNSF)
 of China under Grant 61773024.}
 \thanks{Xiuhui Peng and Zhiyong Geng are with State Key Laboratory for Turbulence and Complex Systems,
 Department of Mechanics and Engineering Science, College of Engineering, Peking University, Beijing 100871,
 China. Email:xhpeng2014@pku.edu.cn, zygeng@pku.edu.cn.}
 \thanks{Zhiyong Sun is  with   Department of Automatic Control, Lund
 University, Sweden. Email: zhiyong.sun@control.lth.se, sun.zhiyong.cn@gmail.com }
 \thanks{Kexin Guo is with the School of Electrical and Electronic
 Engineering, Nanyang Technological University, 639798, Singapore. E-mail:
 guok0005@e.ntu.edu.sg.}
 }

\maketitle

\begin{abstract}
This paper addresses  forward motion control  for trajectory tracking and mobile  formation coordination   for a group of non-holonomic vehicles on $SE(2)$. Firstly, by  constructing  an intermediate attitude variable which involves  vehicles' position information and desired attitude,   the translational and rotational control inputs are designed in two stages to solve the  trajectory tracking problem.
Secondly, the coordination relationships of relative positions and headings are explored thoroughly for a group of non-holonomic vehicles to maintain  a mobile formation with rigid body motion constraints. We prove that, except for the cases of parallel formation and translational straight line formation,  a  mobile  formation with strict rigid-body motion can be achieved if and only if the ratios of linear speed to angular speed for each individual vehicle are constants. Motion properties for  mobile formation with weak rigid-body motion are also demonstrated.
Thereafter, based on the proposed trajectory tracking approach, a distributed mobile   formation control law is designed under a directed tree graph.
The performance of the proposed  controllers is validated by both numerical simulations and experiments.
\end{abstract}

\begin{IEEEkeywords}
Non-holonomic vehicles; forward motion; mobile formation coordination; rigid-body motion.
\end{IEEEkeywords}

{}

\definecolor{limegreen}{rgb}{0.2, 0.8, 0.2}
\definecolor{forestgreen}{rgb}{0.13, 0.55, 0.13}
\definecolor{greenhtml}{rgb}{0.0, 0.5, 0.0}

\section{Introduction}


\IEEEPARstart{R}{elated} research topics of multi-vehicles coordination  control include  dynamics of vehicles (e.g.,   integrator-based dynamics, linear or nonlinear systems,   Euler-Lagrange (EL) systems, etc) \cite{rencao}, communication  modes (e.g., undirected or directed graphs, fixed or switching communication topology, communication delays, etc) \cite{dongxiwang2016}, and various coordination tasks (e.g., consensus, flocking, formation, rigid shape swarming, etc) \cite{oh2015}. In practice, vehicles often suffer from under-actuated constraints, such as the well-known non-holonomic constraints in the planar motion of unmanned vehicles \cite{mag2018}, which should be taken into consideration in coordination control design. In this paper, by taking into account these motion constraints, we focus on  the mobile formation coordination for multiple non-holonomic vehicles.



First of all, we study coordination control with guaranteed forward motion to achieve vehicle trajectory tracking. In practice, the kinematics of non-holonomic vehicles, such as ground vehicles or  unmanned aerial vehicles (UAVs), are often described by a unicycle model. Therefore, besides the  non-holonomic motion constraint,   a speed   constraint should be taken into account since the vehicles only have forward motion (e.g., positive airspeeds for fixed-wing UAVs  {\cite{weitz2008,Stipanovic2004,duan2009}}). One similar scenario, for example, is that the backward motion is not allowed when a group of unicycle-type vehicles  pass though a narrow path. This raises an interesting control problem that a positive forward motion  must be maintained all the time. To solve this problem, some papers, {e.g.,  \cite{sun,qinjiahu,ma}} assume that the linear speed is a positive constant and mainly focus on the design of  angular speed input. However, there are limited results that take both linear speed and angular speed constraints into the design consideration. In {\cite{weitz2008,Stipanovic2004}}, the forward motion is solved by taking acceleration as an auxiliary design variable (i.e., a dynamic controller), and a model predictive control (MPC) approach which demands high online computational cost is presented in {\cite{duan2009}}. From a driver's perspective, the heading of the vehicle needs to be adjusted simultaneously according to  the position error to a target vehicle so that the tracking task can be achieved, and this control framework has been adopted in tracking control of under-actuated quadrotors {\cite{ding2016,zouyao2}}. In this paper, we extend this idea to design the forward motion controller for non-holonomic vehicles. The novelty of the proposed control law is that it can guarantee a forward motion all the time,  which can be applied to coordinate  UAV-type systems with positive forward speeds.


Beyond the single-agent control, formation control as a typical multi-agent application has been investigated widely, which aims to control multiple vehicles to form and maintain a prescribed geometric shape \cite{oh2015}. Many approaches, such as behavior-based approach {\cite{lawton2003}}, virtual structure \cite{do2007},  leader-follower structure  {\cite{das2002}}, potential field method {\cite{cheah}},  consensus-based approach {\cite{dongxiwang2}}, and MPC approach {\cite{sunzhongqi,sunzhongqi2}},   are proposed to tackle the formation control problem for non-holonomic vehicles. Besides the formation stabilization control approaches, the research of mobile formation maneuvers is also an important issue, especially for the mobile formation coordination with rigid-body motion constraints.

Based on   position errors in the global frame,  the formation tasks for multiple non-holonomic vehicles have been studied in \cite{yuxiao,liutengfei1,liutengfei2,miaozhiqiang}. In those papers, the formation control schemes  only consider formation coordination control with translational  motions (i.e., the formation shape only has translational motions while rotations are not permitted).  Based on the leader-follower structure, the desired formation shapes in {\cite{das2002,liangxinwu}} are defined in the leader's coordinate frame. For example,  the formation shape stabilization problem where the follower maintains a desired distance and orientation with respect to its leader's coordinate frame  is studied in {\cite{das2002}}, and the formation control problem that the follower maintains a desired relative position with respect to its leader's coordinate frame  is studied in {\cite{liangxinwu}}. As an extension of the leader-follower scheme, the formation shapes in {\cite{Consolini2008,Morbiditac,liangxinwu2}} are specified in the follower's coordinate frame. For example,   a desired distance and orientation of the leader's barycenter with respect to the follower's coordinate frame is used in {\cite{Consolini2008,Morbiditac}}, and the formation shape in {\cite{liangxinwu2}} is specified by  a desired relative position of the leader's barycenter with respect to the follower's coordinate frame. Based on the graph  rigidity theory, the paper {\cite{milad}} has proposed  rigid-graph-based formation control laws to achieve and maintain a rigid formation shape.
In addition, the moving target circular formation task has been studied in {\cite{Arranz}}.
However, those formation maneuver controllers cannot maintain the mobile formation shape with a weak/strict rigid-body motion (definitions will be clear in the context).  In the formation control field, certain important applications, such as rigid formation pattern and aircraft military maneuver,  among others, often require to maintain fixed relative positions for all vehicles with respect to  one common vehicle (namely, a mobile formation with weak rigid-body motion).  Some particular application scenarios, including aircraft refueling or satellite docking, require  a fixed relative position between any two vehicles (namely, a mobile  formation with strict rigid-body motion).   By taking into account the non-holonomic constraint,   formation control and motion coordination  with weak/strict rigid-body motion for  multiple non-holonomic vehicles becomes even more challenging.  To our best knowledge, the condition of  mobile formation with weak/strict rigid-body motion for  non-holonomic vehicles has not ever been discussed before, which will be one of the main focuses in this paper.



Formation control with fixed or rigid shapes for multiple single- or double-integrator agents (i.e., point-mass type models) has been surveyed in \cite{Anderson2008}, wherein most control laws are constructed  with position errors and desired distances among  agents. Nevertheless, compared with integrator-based vehicle models, the research on mobile formation with weak/strict rigid-body motion for vehicles which are modeled by nonlinear manifolds becomes more meaningful.  The fixed distances and relative configurations can be straightforwardly regulated   for fully-actuated planar vehicles \cite{dong2013,liuyongfang}. However, it still remains  unclear to explore  the relationship among  fixed distances, fixed relative positions, headings, and speed constraints   of multiple non-holonomic vehicles while maintaining a mobile formation with weak/strict rigid-body motion.  In the light of the body-fixed frame and inertial frame, several interesting properties on mobile  formation behavior  for   multiple non-holonomic vehicles are first explored in this paper. Different from  the rigid formation control with graph rigidity condition in \cite{Anderson2008},  the uniqueness of mobile  formation with weak/strict rigid-body motion can be achieved by certain specified formation tasks involving relative positions and headings in this paper.   Thereby, a distributed mobile formation control law for coordinating multiple vehicles with strict rigid-body motion is designed under a directed tree graph.

To summarize, the main contributions of this paper are listed as follows:

\begin{enumerate}
\item[1.] A novel forward motion controller  is proposed  to realize tracking control of SE(2) non-holonomic vehicles. By coupling the position error and desired attitude information,  an intermediate attitude is presented for non-holonomic vehicles to achieve forward motion control for  trajectory tracking to a leader vehicle. The   proposed control inputs  ensure a positive forward motion all the time. In addition, the saturation of inputs is also guaranteed.  The proposed results can be applied to not only  unicycle-type ground vehicles, but also fixed-wing UAVs flying in the planform.

\item[2.] The motion properties of relative positions and headings are  explored thoroughly  for a group of mobile non-holonomic vehicles maintaining a target formation shape. The proposed adjoint orbit and its properties are presented in the first two propositions in this paper. To ensure that any two vehicles have a fixed  relative position in a mobile formation, we prove that the ratios of linear speed to angular speed for each individual vehicle  have to be constants  except for the cases of parallel formation and translational straight line formation. Motion properties for  mobile formation with weak rigid-body motion are also demonstrated. To our best knowledge, it is the first time that such  necessary and sufficient conditions of mobile formation with weak/strict rigid-body motion for non-holonomic vehicles are studied.

\item[3.]  The control inputs for maintaining a mobile formation with weak/strict rigid-body motion for each vehicle are provided in this paper.  Based on the proposed
 tracking control law and mobile formation coordination theory,  a fully distributed mobile
 formation control law is designed to form and maintain a mobile formation with strict rigid-body motion.
\end{enumerate}

The remainder of this paper is structured as follows. The notations and problem  formulation are introduced in Section~\ref{sec:prob}. The forward motion controller and its stability analysis are presented in Section \ref{sec:twoStage}. Section \ref{sec:formation} introduces  mobile   formation coordination of multiple non-holonomic vehicles. Simulation and experiment results are shown in Section \ref{sec:sml}. Finally, conclusions are drawn in Section \ref{sec:cls}.

\section{Background and Problem formulation} \label{sec:prob}

\subsection{Notations}

Notations and concepts in this paper are fairly standard.  $\|x\|$ denotes the Euclidean norm of a vector $x$.
The symbol $x\in \mathbb{S}^{1}$ represents a vector $x\in \mathbb{R}^{2}$ with unit Euclidean norm.
Let $I_{p}$ denote the identity matrix of dimension $p\times p$.  Let ${e_{1},e_{2}}$ denote the natural bases of $\mathbb{R}^{2}$. Let the principal axes of the rigid body define a body-fixed
reference frame attached to the vehicle's center of mass and be denoted by $\mathcal{F}_{B}$. Let $\mathcal{F}_{I}$
be the inertial frame.  Let $\sigma(\cdot)$ define a saturation function which satisfies $|\sigma(x)|<1$, $\sigma(0)=0$ and $x\sigma(x)>0$ for all $x\neq0$, where $x\in \mathbb{R}^{n}$. Examples of the function $\sigma(x)$ include $\tanh(x)$ and $\frac{x}{\sqrt{1+x^{2}}}$.

In this paper, we use $SE(2)$ to describe the configuration space of a planar vehicle. Special Orthogonal group is denoted as $SO(2):=\{R\in \mathbb{R}^{2\times2}:R^{T}R=I_{2}, \mathrm{det}(R)=1\}$, and $p\in \mathbb{R}^{2}$ represents the position. The group element $g\in SE(2)$ is denoted  by
$$g=\left[
               \begin{matrix}
              R & p \\
               0 &  1
               \end{matrix}
               \right]=\left[
               \begin{matrix}
              \cos\theta & -\sin\theta & x \\
               \sin\theta & \cos\theta & y \\
               0 & 0 & 1
               \end{matrix}
               \right]$$
where a rotation matrix $R \in SO(2)$ describes the orientation from $\mathcal{F}_{B}$ to $\mathcal{F}_{I}$. The Lie algebra $\mathfrak{se}(2)$  denotes the velocity of a vehicle  in $\mathcal{F}_{B}$. A Lie algebra element $\hat{\xi} \in \mathfrak{se}(2)$ is denoted as
$$\hat{\xi}=\left[
               \begin{matrix}
              \hat{\omega} & \nu \\
               0 &  0
               \end{matrix}
               \right]=\left[
               \begin{matrix}
              0  & -\omega & \nu_{x} \\
               \omega & 0 & \nu_{y} \\
               0 & 0 & 0
               \end{matrix}
               \right]$$
where $\xi=[\omega, \nu_{x}, \nu_{y}]^{T} \in \mathbb{R}^{3}$, $\nu \in \mathbb{R}^{2}$ represents the  translational
speed, $\omega \in \mathbb{R}$ is the angular speed and  $\hat{\omega}\in \mathfrak{so}(2)$. The hat operator $(\cdot)^{\wedge}: \mathbb{R} \rightarrow \mathfrak{so}(2)$ is a linear map, where $\mathfrak{so}(2):=\{\hat{x}\in \mathbb{R}^{2\times 2} |\hat{x}^{T}=-\hat{x}\}$.   The inverse operator  to the hat operator $(\cdot)^{\wedge}$ is the vee operator $(\cdot)^{\vee}: \mathfrak{so}(2)\rightarrow \mathbb{R}$. For all $g\in SE(2)$, $A,B \in \mathfrak{se}(2)$, the adjoint map $\mathrm{Ad}_{g}$ is denoted as $\mathrm{Ad}_{g}A=gAg^{-1}$.

Without side slipping, the velocity $\xi$ of vehicles should satisfy $\nu_{y}=0$, represented by the non-holonomic constraint $\dot{x}\cos\theta-\dot{y}\sin\theta=0$.
 In this paper, we let $v \triangleq \nu_{x}$ denote the linear speed of the non-holonomic vehicles.


\subsection{Problem Description}

Consider $n+1$ non-holonomic vehicles labeled with $i=0,1,\cdots,n$,  in which the equation of motion for vehicle $i$ is modeled as
\begin{subequations}  \label{model}
\begin{align} \label{modelp}
\dot{p}_{i}  =& v_{i}R_{i}e_{1} \\ \label{modelR}
\dot{R}_{i}=& R_{i}\hat{\omega}_{i}
\end{align}
\end{subequations}
where the subscript $i$ represents vehicle $i$.
The node $0$ represents the leader and other node indexes  are followers. We give the following assumption on the desired speeds $v_{0}$ and $\omega_{0}$ for a leader vehicle.

\begin{assumption} \label{desiredw}
The desired speeds $v_{0}$, $\omega_{0}$  are assumed to be bounded for all time, and $v_{0}$ is assumed to be positive, i.e., $v_{0}>0$ all the time.
\end{assumption}

For a non-holonomic unicycle-type ground vehicle, the linear speed $v_{i}$ can be positive, negative or zero. However, in practice such as fixed-wing UAVs  with zero or  negligible  speed in the direction  perpendicular to vehicles' headings,   a persistent forward motion with a positive linear speed $v_{i}$ should be guaranteed. We consider this motion condition as a constraint in the tracking and formation control design.

 The main objective of this paper is to solve two coordination problems for unicycle-type  vehicles  subject to non-holonomic constraint. The first one is to design an appropriate  control law to realize a forward motion control  that achieves  trajectory tracking  of non-holonomic vehicles with $v_{i}>0$ all the time. The proposed forward motion control  approach can be applied to the trajectory tracking, formation  guidance, and other applications involving  UAVs with a positive forward speed.   The second control problem  is to thoroughly explore motion properties of relative positions and headings for a group of non-holonomic vehicles to maintain a mobile  formation  with certain motion constraints, and then design a distributed coordination control law  to form and maintain  a mobile  formation with strict rigid-body motion for multiple non-holonomic vehicles.

\section{Two-vehicles case: forward motion tracking control design} \label{sec:twoStage}

In this section, we focus on designing forward motion control laws that ensure a vehicle with index $1$ to track the leader with index $0$. First,  control inputs are proposed to solve the forward motion control  problem  for non-holonomic vehicles. Then, stability analysis of the closed-loop system is provided.

\subsection{Control input  design}
In the following, the control inputs $v_{1}$ and $\omega_{1}$ are designed, which consist of three steps.

\emph{Firstly,} the translational control input $v_{1}$ is designed. Let $p_{01}=p_{1}-p_{0}$ denote the position error between vehicle $1$ and leader $0$.
Then, the  virtual control input  and linear speed input are given by
\begin{equation}  \label{trackui}
u_{1}=-k_{1}\sigma(p_{01})+v_{0}R_{0}e_{1}
 \end{equation}
 \begin{equation} \label{trackvi}
 v_{1}=\|u_{1}\|
 \end{equation}
where $u_{1}\in \mathbb{R}^{2}$ is a virtual control input vector and $k_{1}>0$ is a constant control gain.

\emph{Secondly,}  an intermediate attitude   $\mathcal{R}_{0} \in SO(2)$ is constructed, which is given by
\begin{equation} \label{R01}
\mathcal{R}_{0}=[r_{0}^{1},r_{0}^{2}]\in SO(2)
\end{equation}
with the vectors  defined by
\begin{equation*} \label{b1i}
r_{0}^{1}=\frac{u_{1}}{\|u_{1}\|} \in \mathbb{S}^{1}, \quad r_{0}^{2}=\left[
               \begin{matrix}
               -r_{0}^{1}(2,1)\\
               r_{0}^{1}(1,1)
               \end{matrix}  \right]  \in \mathbb{S}^{1}
\end{equation*}
The kinematics of the intermediate attitude $\mathcal{R}_{0}$ satisfies the equation $\dot{\mathcal{R}}_{0}=\mathcal{R}_{0}\hat{\varpi}_{0}$, and the angular speed $\varpi_{0}$ is derived as
\begin{equation}
\varpi_{0}=(\mathcal{R}_{0}^{T}\dot{\mathcal{R}}_{0})^{\vee}
\end{equation}

\emph{Finally,} the rotational control input $\omega_{1}$ is designed. In our proposed approach, the attitude of vehicle $1$ is required to track the intermediate attitude $\mathcal{R}_{0}$. Let $R_{01}=\mathcal{R}_{0}^{T}R_{1}\in SO(2)$ be the rotation error between the attitude of vehicle $1$ and the intermediate attitude $\mathcal{R}_{0}$. Thus, the angular speed input $\omega_{1}$ is designed by
\begin{equation} \label{trackwi}
\omega_{1}=-k_{2}\sigma((R_{01}-R_{01}^{T})^{\vee})+\varpi_{0}
\end{equation}
where $k_{2}>0$ is a constant control gain.
\begin{rmk}
From the kinematics equation $(\ref{model})$, the evolution of attitude $R_{i}$ is controlled merely by a self-governed angular speed input $\omega_{i}$. Nevertheless, the evolution of position $p_{i}$ depends on both the heading $R_{i}$ and the linear speed input $v_{i}$. The intuition of the control input design $v_1$ and $\omega_1$ comes from  the scenario of a manned vehicle tracking another vehicle, wherein a driver constantly adjusts the heading of the vehicle according to the visible  position error, as illustrated in  Fig.~$\ref{direction}$. It can be seen that the direction of the position error vector $p_{0}-p_{1}$ points from the mass center of  vehicle $1$ to the mass center of leader $0$. A reasonable manner is to drive the heading $R_{1}$ of vehicle $1$ to track the direction vector $p_{0}-p_{1}$. Thus, by enhancing the couplings between the position and attitude variables, we propose a framework for designing  translational controller and attitude controller in two stages as shown in Fig.~$\ref{frame}$. In this framework, an intermediate attitude $\mathcal{R}_{0}$ in $(\ref{R01})$ is constructed by the position error vector ($p_{0}-p_{1}$) and the desired attitude $R_{0}$, and therefore the actual heading angle $R_{1}$ tracks the intermediate direction angle $\mathcal{R}_{0}$. Once the position error converges to zero, the intermediate rotation matrix $\mathcal{R}_{0}$ will be the same to the desired rotation matrix $R_{0}$ since the rotation matrix $R \in SO(2)$ is a one-parameter subgroup \cite{Bullo2005}.
 \end{rmk}

\setlength{\abovecaptionskip}{0.cm}
\setlength{\belowcaptionskip}{-0.cm}
	\begin{figure}[!htb]
	\centering
	{		
		\subfigure[]
		{\label{direction}\includegraphics[width=0.4\linewidth]{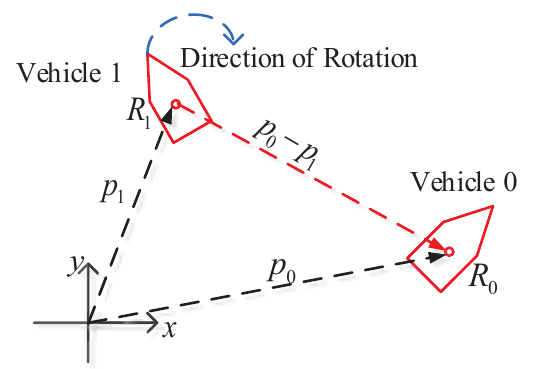}}
		\subfigure[]
		{\label{frame}\includegraphics[width=0.45\linewidth]{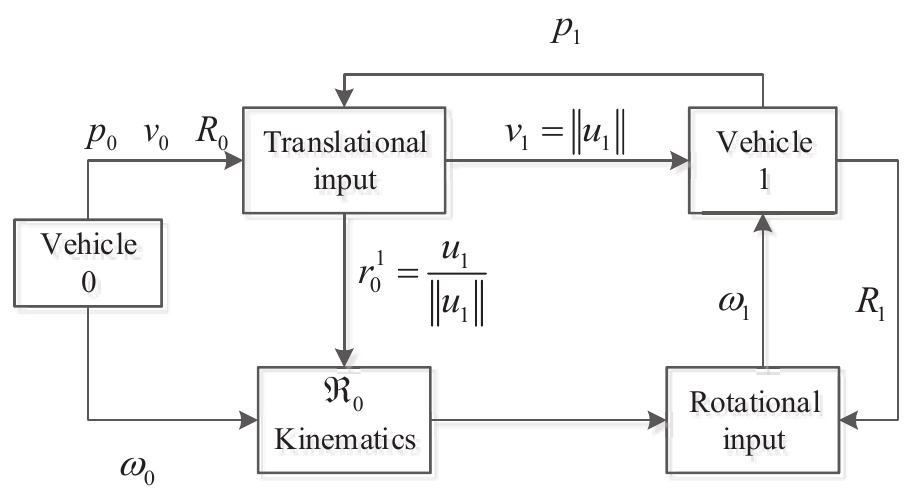}}
				
	}
	\caption{The tracking control strategy: (a). forward motion control  in a tracking control scenario; (b). the proposed two-stage framework.}
	\label{twostage}
\end{figure}

\subsection{Stability analysis}

In this subsection we present detailed analysis on the stability of vehicle $1$ with the control inputs $v_{1}$ and $\omega_{1}$.

From  \eqref{trackvi} and \eqref{R01}, the control vector $u_1$ can be equivalently stated as  $u_{1}\triangleq v_{1}\mathcal{R}_{0}e_{1}$.
Then, the kinematics $(\ref{modelp})$ can be rewritten as
\begin{equation}
\dot{p}_{1}=u_{1}+\Delta_{1}
\end{equation}
where $\Delta_{1}=v_{1}R_{1}e_{1}-v_{1}\mathcal{R}_{0}e_{1} \in \mathbb{R}^{2}$ can be seen as a perturbed term. Hence, the dynamics of the  position error $p_{01}$ and the attitude error $R_{01}$ are given by
\begin{equation} \label{modelde}
\begin{cases}
 \dot{p}_{01}= -k_{1}\sigma(p_{01})+\Delta_{1}  \\
 \dot{R}_{01}  =  -k_{2}R_{01}\sigma(R_{01}-R_{01}^{T})
\end{cases}
\end{equation}
Without the perturbation term $\Delta_{1}$, the closed dynamics $(\ref{modelde})$ can be written as
\begin{equation} \label{modele}
\begin{cases}
 \dot{p}_{01}= -k_{1}\sigma(p_{01})  \\
 \dot{R}_{01}  =  -k_{2}R_{01}\sigma(R_{01}-R_{01}^{T})
\end{cases}
\end{equation}

Let $\theta_{01}=\theta_{1}-\phi_{0}$ denote the Euler-angle error, where $\theta_{1}\in (-\pi,\pi]$ and $\phi_{0}\in (-\pi,\pi]$ are the Euler-angles corresponding to the rotation matrices $R_{1}$ and $\mathcal{R}_{0}$, respectively. Before giving  the main result of forward motion control for  leader trajectory tracking, the following lemmas are studied firstly.
\begin{lemma} \label{condition}
(see \cite{ding2016,zouyao}) Suppose the control inputs $v_{1}$ and $\omega_{1}$ stabilize the states of the (unperturbed) closed-loop error systems $(\ref{modele})$ asymptotically and there exists a bounded positive constant $\varphi$ such that $\|\Delta_{1}\|<\varphi \|\theta_{01}\|$.  Then the closed-loop error systems $(\ref{modelde})$ are asymptotically stable.
\end{lemma}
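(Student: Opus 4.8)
The plan is to exploit the cascade structure of the perturbed error dynamics (\ref{modelde}) and reduce the claim to a vanishing-perturbation argument. The crucial observation is that the perturbation $\Delta_{1}$ enters only the translational channel: the attitude error dynamics $\dot{R}_{01}=-k_{2}R_{01}\sigma(R_{01}-R_{01}^{T})$ are \emph{identical} in (\ref{modele}) and (\ref{modelde}) and do not depend on $p_{01}$. Hence (\ref{modelde}) is an interconnection in which an autonomous attitude subsystem drives the translational subsystem through $\Delta_{1}$, and the task is to show that this driving term does no lasting harm.

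First I would use the standing hypothesis that the unperturbed system (\ref{modele}) is asymptotically stable. Since its attitude subsystem coincides with that of (\ref{modelde}), the attitude error converges on its own, so $R_{01}\to I_{2}$ and equivalently $\theta_{01}(t)\to 0$ as $t\to\infty$. Combining this with the assumed bound $\|\Delta_{1}\|<\varphi\|\theta_{01}\|$ shows that $\Delta_{1}$ stays bounded along trajectories and that $\Delta_{1}(t)\to 0$; that is, $\Delta_{1}$ is a \emph{vanishing} perturbation slaved to the convergent attitude state.

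Next I would establish input-to-state stability of the nominal translational subsystem $\dot{p}_{01}=-k_{1}\sigma(p_{01})$ with respect to the input $\Delta_{1}$. Taking the candidate $V_{p}=\tfrac{1}{2}\|p_{01}\|^{2}$ and using the defining properties of $\sigma$ (namely $p_{01}^{T}\sigma(p_{01})>0$ for $p_{01}\neq 0$ together with boundedness of $\sigma$), one obtains $\dot{V}_{p}=-k_{1}\,p_{01}^{T}\sigma(p_{01})+p_{01}^{T}\Delta_{1}$, so the sign-definite term dominates whenever $\|p_{01}\|$ is large relative to $\|\Delta_{1}\|$. This yields an ISS-type estimate and, in particular, rules out finite escape time because $\Delta_{1}$ is bounded. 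Finally I would invoke the standard cascaded-systems result used in the cited references \cite{ding2016,zouyao}: an ISS driven subsystem fed by a globally asymptotically stable driving subsystem through an interconnection term that vanishes with the driving state is itself asymptotically stable. Concretely, since $p_{01}$ is ISS in $\Delta_{1}$ and $\Delta_{1}(t)\to 0$, a converging-input converging-state argument gives $p_{01}(t)\to 0$, which together with $R_{01}\to I_{2}$ establishes asymptotic stability of (\ref{modelde}).

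The main obstacle I anticipate is the ISS step for the \emph{saturated} translational subsystem: because $\sigma$ is bounded, the naive quadratic comparison is valid only on the region where the sign-definite term outweighs the linear-in-$\Delta_{1}$ term, so one cannot simply read off a global exponential ISS gain. The careful argument instead combines the boundedness of $\Delta_{1}$ (which prevents escape and keeps trajectories in a compact set) with the vanishing bound $\|\Delta_{1}\|<\varphi\|\theta_{01}\|$ (which forces eventual convergence), and it is this interplay of global boundedness and asymptotic decay, rather than a uniform ISS gain, that must be handled with care to conclude asymptotic stability.
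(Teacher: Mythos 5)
Your argument is sound, but note that the paper itself does not prove Lemma~\ref{condition} at all: it is imported verbatim from the cited references \cite{ding2016,zouyao}, so there is no in-paper proof to compare against. What you supply is the standard cascade/vanishing-perturbation argument that underlies those references, and you identify the right structure: the attitude error dynamics in (\ref{modelde}) are autonomous and identical to those in (\ref{modele}), so $R_{01}\to I_{2}$ and hence $\theta_{01}\to 0$ independently of $p_{01}$; the bound $\|\Delta_{1}\|<\varphi\|\theta_{01}\|$ then makes $\Delta_{1}$ a bounded, vanishing input to the translational channel; and a converging-input converging-state argument closes the loop. You are also right to flag that the saturated subsystem $\dot{p}_{01}=-k_{1}\sigma(p_{01})+\Delta_{1}$ is not globally ISS with a uniform gain, since $-k_{1}p_{01}^{T}\sigma(p_{01})$ grows only linearly in $\|p_{01}\|$. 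One phrase deserves tightening: boundedness of $\Delta_{1}$ by itself only excludes finite escape time (the right-hand side is globally bounded), it does not by itself confine trajectories to a compact set; the correct chain is that $p_{01}(T)$ is finite at any finite $T$, and once $\|\Delta_{1}(t)\|$ has decayed below the threshold $k_{1}\inf_{\|p\|\geq r}p^{T}\sigma(p)/\|p\|$, the estimate $\dot{V}_{p}\leq -k_{1}p_{01}^{T}\sigma(p_{01})+\|p_{01}\|\,\|\Delta_{1}\|$ forces trajectories into the ball of radius $r$, and letting $r\to 0$ as $\Delta_{1}\to 0$ gives convergence. With that repair your proof is complete and is, in substance, the argument the paper delegates to \cite{ding2016,zouyao}.
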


\begin{lemma} \label{condition2}
Under  Assumption $\ref{desiredw}$ and the saturated input
$(\ref{trackvi})$, the perturbation term $\Delta_{1}$ is bounded.
\end{lemma}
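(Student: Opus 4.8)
The plan is to show that each of the two factors making up $\Delta_{1} = v_{1}(R_{1}e_{1} - \mathcal{R}_{0}e_{1})$ is uniformly bounded, so that their product is bounded as well. First I would bound the linear speed $v_{1} = \|u_{1}\|$. Applying the triangle inequality to $(\ref{trackui})$ gives $v_{1} = \|u_{1}\| \le k_{1}\|\sigma(p_{01})\| + |v_{0}|\,\|R_{0}e_{1}\|$. The saturation property $|\sigma(\cdot)| < 1$ bounds $\|\sigma(p_{01})\|$ by a constant $c_{\sigma}$ uniformly in the state (for instance $c_{\sigma} = \sqrt{2}$ when $\sigma$ acts componentwise on $p_{01} \in \mathbb{R}^{2}$). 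Since $R_{0} \in SO(2)$ is a rotation, $\|R_{0}e_{1}\| = 1$, and by Assumption $\ref{desiredw}$ the desired speed $v_{0}$ is bounded for all time, say $|v_{0}| \le \bar{v}_{0}$. Hence $v_{1} < c_{\sigma}k_{1} + \bar{v}_{0}$, a finite constant independent of time.

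Next I would bound the attitude-mismatch factor. Both $R_{1}$ and $\mathcal{R}_{0}$ belong to $SO(2)$, so $R_{1}e_{1}$ and $\mathcal{R}_{0}e_{1}$ are unit vectors, and the triangle inequality gives $\|R_{1}e_{1} - \mathcal{R}_{0}e_{1}\| \le \|R_{1}e_{1}\| + \|\mathcal{R}_{0}e_{1}\| = 2$. Writing $R_{1}e_{1} = [\cos\theta_{1}, \sin\theta_{1}]^{T}$ and $\mathcal{R}_{0}e_{1} = [\cos\phi_{0}, \sin\phi_{0}]^{T}$ one even obtains the sharper identity $\|R_{1}e_{1} - \mathcal{R}_{0}e_{1}\| = 2|\sin(\theta_{01}/2)|$, which is what will later link $\|\Delta_{1}\|$ to $\|\theta_{01}\|$ when invoking Lemma $\ref{condition}$; however, the crude bound $2$ already suffices for the present claim.

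Finally I would combine the two estimates. Since $v_{1}$ is a scalar, $\|\Delta_{1}\| = v_{1}\,\|R_{1}e_{1} - \mathcal{R}_{0}e_{1}\| \le 2v_{1} < 2(c_{\sigma}k_{1} + \bar{v}_{0})$, which establishes that $\Delta_{1}$ is bounded.

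There is no genuine obstacle in this argument; the only point requiring care is that every bound be \emph{uniform} in time. The saturation estimate holds uniformly over the state because $|\sigma(\cdot)| < 1$ everywhere, the norm-preserving property of $SO(2)$ holds for all attitudes, and the bound on $v_{0}$ holds uniformly in time by Assumption $\ref{desiredw}$, so the resulting constant is a genuine time-independent bound rather than merely a pointwise finite value.
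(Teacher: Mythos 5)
Your proof is correct and takes essentially the same route as the paper's: bound $v_{1}$ by a constant using Assumption \ref{desiredw} and the saturation property, bound the attitude-mismatch factor via the norm-preserving property of $SO(2)$ (the paper uses the identity $\|(\mathcal{R}_{0}-R_{1})e_{1}\|=2\sin(\theta_{01}/2)$, which you also derive), and multiply the two bounds. The only cosmetic difference is that the paper states its conclusion directly in the sharper form $\|\Delta_{1}\|<\varphi\|\theta_{01}\|$ required by Lemma \ref{condition}, which your sharper identity equally delivers.
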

\begin{proof}
From Assumption $\ref{desiredw}$ and the properties of the saturation function, it holds that $v_{1} \leq\eta$, where $\eta$ is a bounded positive constant.  On the other hand, one has
$\|(\mathcal{R}_{0}-R_{1})e_{1}\| = \|(R_{01}-I_{2})e_{1}\| =2\sin(\theta_{01}/2)$.
Thus, we can obtain that
$
\|\Delta_{1}\| \leq v_{1}  \|(\mathcal{R}_{0}-R_{1})e_{1}\| <\varphi \|\theta_{01}\|
$
\end{proof}

The main result of the forward motion control  and trajectory tracking is given as follows.

\begin{thm} \label{thm1}
Suppose that the virtual control vector $u_{1}$, initial rotation matrix $R_{01}$ are required as $\|u_{1}\|\neq0$, $\mathrm{tr}(R_{01}(0))\neq -2$, and  Assumption $\ref{desiredw}$ is satisfied. Under the saturated inputs
$(\ref{trackvi})$ and $(\ref{trackwi})$, the non-holonomic vehicle 1 is able to track the leader 0  asymptotically.
\end{thm}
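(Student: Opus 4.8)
The plan is to exploit the cascade structure already prepared by Lemmas~\ref{condition} and~\ref{condition2}: it suffices to establish asymptotic stability of the \emph{unperturbed} error system $(\ref{modele})$, because Lemma~\ref{condition2} supplies the bound $\|\Delta_1\|<\varphi\|\theta_{01}\|$ and Lemma~\ref{condition} then upgrades asymptotic stability of $(\ref{modele})$ to asymptotic stability of the full perturbed system $(\ref{modelde})$. Since $(\ref{modele})$ decouples into an independent position subsystem and an independent attitude subsystem, I would analyze the two separately. Note in addition that the attitude equation in $(\ref{modelde})$ is \emph{identical} to that in $(\ref{modele})$, so the perturbation $\Delta_1$ enters only the position channel and is driven by the attitude error $\theta_{01}$; this is exactly the vanishing-perturbation cascade that Lemma~\ref{condition} is designed to close.

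For the position subsystem $\dot{p}_{01}=-k_1\sigma(p_{01})$ I would take the radially unbounded candidate $V_p=\tfrac12\|p_{01}\|^2$. Differentiating gives $\dot{V}_p=-k_1\,p_{01}^{T}\sigma(p_{01})$, which is strictly negative for $p_{01}\neq0$ by the defining property $x\sigma(x)>0$ of the saturation function. Hence the origin of the position subsystem is globally asymptotically stable and $p_{01}\to0$.

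For the attitude subsystem I would first reduce the matrix dynamics to a scalar equation. Writing $R_{01}=\exp(\theta_{01}\hat{1})$ and using that in $SO(2)$ the generator $\hat{1}$ commutes with every rotation, one obtains $(R_{01}-R_{01}^{T})^{\vee}=2\sin\theta_{01}$ and therefore $\dot{\theta}_{01}=-k_2\,\sigma(2\sin\theta_{01})$. With $V_R=\tfrac12\mathrm{tr}(I_2-R_{01})=1-\cos\theta_{01}\geq0$, I would compute $\dot{V}_R=-k_2\,\sin\theta_{01}\,\sigma(2\sin\theta_{01})$, which is non-positive everywhere and vanishes exactly when $\sin\theta_{01}=0$, i.e. at $\theta_{01}=0$ and at the antipodal point $\theta_{01}=\pi$ (where $\mathrm{tr}(R_{01})=-2$). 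A phase-line / LaSalle argument on $(-\pi,\pi)$ then shows that $\dot{\theta}_{01}$ points toward $0$ on both sides of the origin, so every trajectory with $\theta_{01}(0)\neq\pi$ converges monotonically to $\theta_{01}=0$; the hypothesis $\mathrm{tr}(R_{01}(0))\neq-2$ is precisely what rules out initialization at the antipodal equilibrium.

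The main obstacle is this antipodal point: it is the familiar topological obstruction to \emph{global} stabilization on $SO(2)$, so only almost-global asymptotic stability is available, and one must verify that $\theta_{01}=\pi$ is repelling (via the sign of $\dot\theta_{01}$ near it) and is never reached from an admissible initial condition. Once both subsystems are asymptotically stable, their decoupled product $(\ref{modele})$ is asymptotically stable; combining this with the bound from Lemma~\ref{condition2} I would invoke Lemma~\ref{condition} to conclude asymptotic stability of $(\ref{modelde})$. In particular $p_{01}\to0$, so vehicle~$1$ tracks leader~$0$ asymptotically. Finally, forward motion is retained because $v_1=\|u_1\|>0$ whenever $u_1\neq0$, and asymptotically $u_1=-k_1\sigma(p_{01})+v_0R_0e_1\to v_0R_0e_1$ with $\|v_0R_0e_1\|=v_0>0$ by Assumption~\ref{desiredw}.
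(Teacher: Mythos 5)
Your proposal is correct and follows the same skeleton as the paper's proof: establish asymptotic stability of the unperturbed error system $(\ref{modele})$ by a Lyapunov/LaSalle argument, then invoke Lemmas~\ref{condition} and~\ref{condition2} to transfer stability to the perturbed cascade $(\ref{modelde})$. The one place where you genuinely diverge is the treatment of the attitude error. The paper uses a single combined function $V=\tfrac12 p_{01}^{T}p_{01}+\mathrm{tr}(I_{2}-R_{01})$ and dismisses the antipodal equilibrium with the remark that $V(t)\leq V(0)$ excludes $\mathrm{tr}(R_{01})=-2$; strictly speaking that level-set argument only works if $V(0)<4$, which fails for large initial position errors since the position term also contributes to $V(0)$. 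Your decoupled analysis --- observing that the attitude subsystem of $(\ref{modele})$ is autonomous, reducing it to the scalar equation $\dot{\theta}_{01}=-k_{2}\sigma(2\sin\theta_{01})$, and running a phase-line argument on $(-\pi,\pi)$ --- closes this gap cleanly: monotone convergence to $\theta_{01}=0$ holds for every $\theta_{01}(0)\neq\pi$ regardless of the position error, and $\theta_{01}=\pi$ is seen directly to be a repelling equilibrium. Your closing observation that $u_{1}\to v_{0}R_{0}e_{1}$ also recovers the paper's final step that $\mathcal{R}_{0}^{T}R_{0}\to I_{2}$, so the actual heading converges to the leader's heading. In short: same route, but your version of the antipodal-exclusion step is the more rigorous one.
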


\begin{proof}
Under the assumption $\|u_{1}\|\neq 0$, the  intermediate rotation matrix $\mathcal{R}_{0}$ will  always be smooth.
Define the positive definite Lyapunov function as $
V=\frac{1}{2}p_{01}^{T}p_{01}+\mathrm{tr}(I_{2}-R_{01})
$.
Then, its time derivative  along the trajectories $(\ref{modele})$ is given by
\begin{equation} \label{dV}
\dot{V}=-k_{1}p_{01}^{T}\sigma(p_{01})-k_{2}((R_{01}-R_{01}^{T})^{\vee})\sigma((R_{01}-R_{01}^{T})^{\vee}) \leq 0
\end{equation}
where the fact that $\mathrm{tr}(R\hat{\omega})=-\omega(R-R^{T})^{\vee}$ is used. Note that from $\dot{V}\leq 0$, one has that  the closed-loop error system  $(\ref{modele})$ is Lyapunov stable. Note that $V(t)\leq V(0)$, which implies  the undesired equilibrium $\mathrm{tr}(R_{01})=-2$ is excluded. Based on the properties of saturation functions that $x\sigma(x)=0$ if and only if $x=0$, one has that $p_{01}$ and $R_{01}$ converge to the set on which $\dot{V}=0$ which is denoted by $S=\{(p_{01},R_{01})=(0,I_{2}) | \dot{V}=0\}$.  Thus, the closed-loop error system  $(\ref{modele})$ is asymptotically stable, i.e., $p_{01}\rightarrow 0$, $R_{01}\rightarrow I_{2}$ as $t\rightarrow \infty$.

In the light  of  Lemmas $\ref{condition}$ and $\ref{condition2}$, the asymptotic convergence of the closed-loop error system (\ref{modelde}) is proved.
While the position error $p_{01}\rightarrow 0$ as $t\rightarrow \infty$, the vector $r_{0}^{1} \rightarrow R_{0}e_{1}$ as $t\rightarrow \infty$. Since the rotation matrix $R \in SO(2)$ is a one parameter subgroup, one has that $\mathcal{R}_{0}^{T}R_{0}=I_{2}$ as  $t\rightarrow \infty$.

\end{proof}

\begin{rmk}
In Theorem {\ref{thm1}}, we assume $\|u_{1}\|\neq0$ all the time.  We note that the smoothness of the intermediate rotation matrix $\mathcal{R}_{0}$ is not guaranteed if $\|u_{1}\|=0$. To avoid approaching the point $\|u_{1}\|=0$, we can adopt the switching-based approach proposed in \cite{ding2016}. Once the vehicle is approaching and encounters $\|u_{1}\|=0$, we can  hold $v_{1}>0$ and $\mathcal{R}_{0}$. As long as the leader keeps moving, the state $\|u_{1}\|=0$ will not last all the time. After this moment, the controller will make the position error and attitude error converge to zero. Alternatively, $\|u_{1}\|=0$ can be avoided by choosing an appropriate gain $k_{1}$. For example, from $(\ref{trackvi})$, one has $v_{1} \geq \|v_{0}R_{0}e_{1}\|-\|-k_{1}\sigma(p_{01})\| \geq v_{0}-k_{1} > v_{min_{0}}-k_{1}$, where $v_{min_{0}}< v_{0}$ for all times. By choosing $v_{min_{0}}-k_{1}>0$, $\|u_{1}\|=0$ can never occur.
\end{rmk}

\section{Mobile  formation coordination of multiple non-holonomic vehicles} \label{sec:formation}

In this section, we study  mobile formation coordination  for multiple  non-holonomic vehicles. To be specific, we shall explore some properties of mobile formation with motion constraints   and then design a control law  to form and maintain a desired mobile formation with strict rigid-body motion. In the following, rigid-body motion (composed by   translational motion and rotational motion) is meant for a formation shape maintained by multiple mobile vehicle.

\subsection{Motion behaviors of mobile formation with motion constraints}

In the context of non-holonomic vehicle formation,  we give the following definitions on mobile formation with weak/strict rigid-body motion.


\begin{definition} \label{def:weak_fixed_formation}
A mobile formation with  \textit{weak} rigid-body motion  is a formation that  the relative positions of each mobile vehicle with respect to  one common mobile vehicle (in its body-fixed frame) are kept constant.
\end{definition}

\begin{definition} \label{def:fixed_formation}
A mobile formation with  \textit{strict} rigid-body motion  is a formation that the   relative positions between any two mobile vehicles (in their body-fixed frames) are kept constant.
\end{definition}

\begin{rmk}
For a more intuitive  understanding of the above definitions, three types of mobile formations involving  three vehicles are shown in Fig.~$\ref{formation}$.  Note that all the mobile formations shown in Fig.~$\ref{formation}$ involve rigid and fixed formation shapes \cite{oh2015,Anderson2008}, but the vehicle  motions of formation shapes are different.
The  mobile formation in Fig.~$\ref{formation}$(a) only has translational motion (i.e., the fixed formation shape  only admits a translational motion while all vehicles have a synchronized heading), which has been studied in \cite{yuxiao,liutengfei1,liutengfei2,miaozhiqiang}. Translational motion for the whole formation shape with synchronized vehicles' headings has limited motion freedoms, and is not the focus of this paper. In contrast,    the mobile formation with weak rigid-body motion in Fig.~$\ref{formation}$(b) has  both translational and rotational motion, and  the mobile formation with strict rigid-body motion in Fig.~$\ref{formation}$(c) can be seen as rotating a single rigid body.
According to Definition~\ref{def:weak_fixed_formation}, the mobile formation under a weak rigid-body motion only preserves  fixed relative positions of each vehicle with respect to one common vehicle, and the relative positions and attitudes between any two vehicles (except for the common vehicle) do not have to be constant. In contrast, according to   Definition~\ref{def:fixed_formation}, with a strict rigid-body motion, the mobile formation shape not only preserves fixed inter-vehicle distances  but also keeps constant relative attitudes  between any two vehicles.
\end{rmk}

The parallel formation and translational straight line formation, which are  two special cases of mobile formation with strict rigid-body motion as shown in Fig.~$\ref{formation}$ (d) and Fig.~$\ref{formation}$ (e), respectively,  are defined as follows.

\begin{definition}
A parallel formation is a formation where the headings of all vehicles are synchronized (or anti-synchronized), meanwhile the vehicle group keeps non-zero constant transverse offsets  and  zero longitudinal offsets   between any two vehicles.
\end{definition}

\begin{definition}
A translational straight line formation is  a formation where the angular speeds of all vehicles are zero meanwhile the vehicle group maintains constant transverse offsets and longitudinal offsets   between any two vehicles.
\end{definition}




\setlength{\abovecaptionskip}{0.cm}
\setlength{\belowcaptionskip}{-0.cm}
	\begin{figure}[!htb]
	\centering
	{		
		\subfigure[]
		{\label{nonformation}\includegraphics[width=0.35\linewidth]{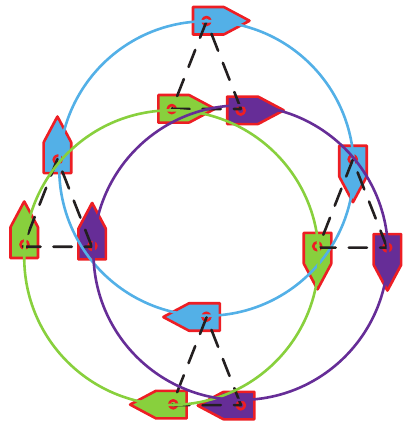}}
		\subfigure[]
		{\label{rigidformation}\includegraphics[width=0.45\linewidth]{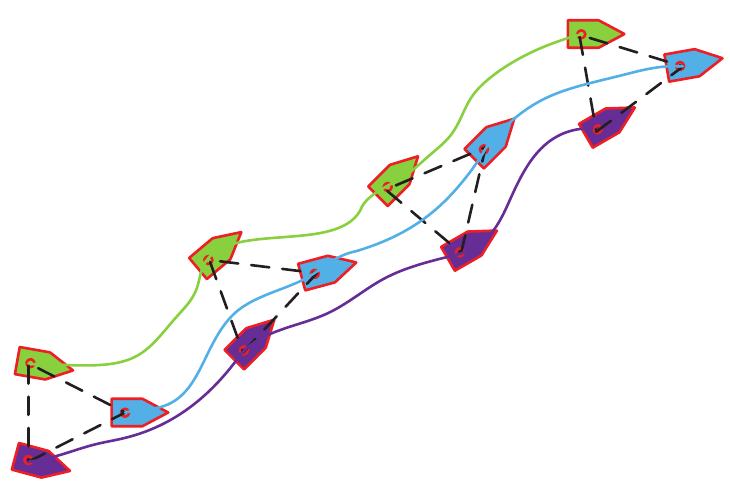}}
		\subfigure[]
		{\label{rigidformation}\includegraphics[width=0.35\linewidth]{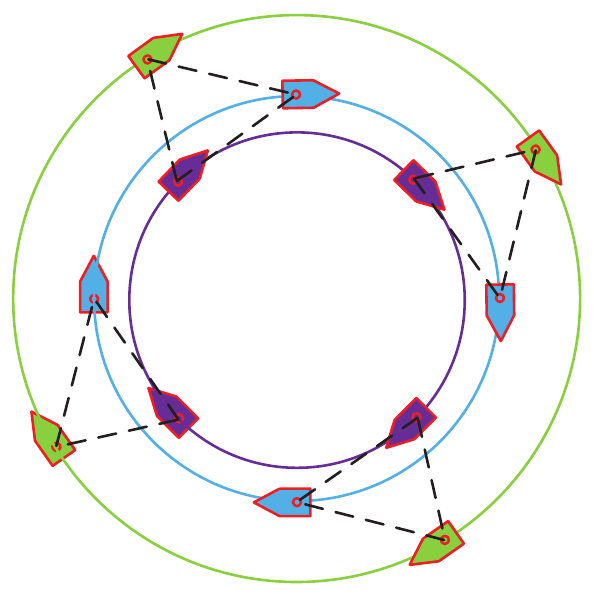}}
		\subfigure[]
		{\label{rigidformation}\includegraphics[width=0.45\linewidth]{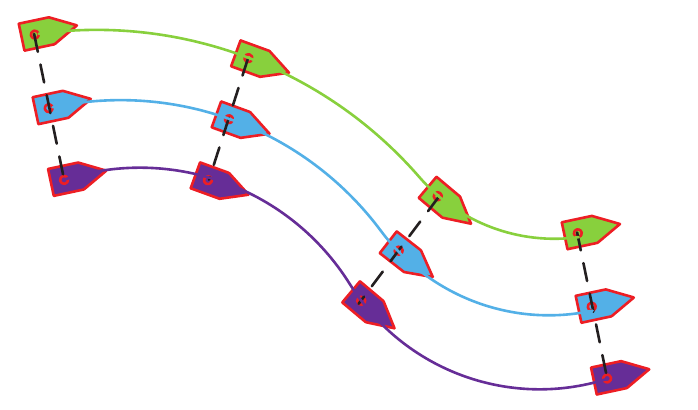}}
		\subfigure[]
		{\label{rigidformation}\includegraphics[width=0.45\linewidth]{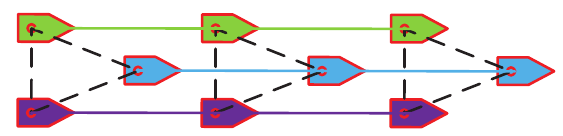}}
	}
	\caption{Illustrations  of mobile formations for non-holonomic vehicles 
: (a). a mobile   formation with only translational motion;  (b). a mobile   formation with weak rigid-body motion; (c). a mobile   formation with strict rigid-body motion; (d). a parallel  formation; (e) a translational straight line formation. (We refer the readers to the attached video for more motion demonstrations of these mobile formations.) }
	\label{formation}
\end{figure}


Since unicycle-type vehicles cannot move sideways, the weak/strict rigid-body motion for a mobile formation imposes further constraints which limit possible motion freedoms for multi-vehicle groups.  In particular, we consider a leader-follower framework to achieve a mobile formation. However, we remark that the vehicle indexed as $0$ is for control inputs that guide a mobile formation,  which is not necessarily a physical leader vehicle. In other words, the motion analysis for the defined mobile formation is applicable  to  both leader-follower structure and leaderless structure.  In the following,    we focus on studying  motion behaviors for a group of vehicles moving in  a mobile  formation with a strict/weak rigid-body motion.  Due to the existence of non-holonomic dynamics, a mobile  formation is required to satisfy certain  conditions that respect all vehicles' motion constraints. We firstly introduce the following definition.

\begin{definition} \label{adjoint}
(see \cite{liuyongfang}) For a mobile formation with a desired shape,  the desired trajectory of vehicle $i$ is called vehicle $i$'s adjoint orbit.
\end{definition}

Let $g_{ji}=g_{j}^{-1}g_{i}$ denote the relative configuration of vehicle $i$ with respect to vehicle $j$,  which is  given by
\begin{align*}
g_{ji}=\left[
               \begin{matrix}
              R_{ji} & R_{j}^{T}(p_{i}-p_{j}) \\
               0 &  1
               \end{matrix}
               \right]
          =\left[
               \begin{matrix}
              \cos\theta_{ji} & -\sin\theta_{ji} & x_{ji}^{j} \\
               \sin\theta_{ji} & \cos\theta_{ji} & y_{ji}^{j} \\
               0 & 0 & 1
               \end{matrix}
               \right]
\end{align*}
where $\theta_{ji}=\theta_{i}-\theta_{j}$. Then, the relative position of vehicle $i$ with respect to vehicle $j$ is defined as ${p}_{ji}^{j}=R_{j}^{T}(p_{i}-p_{j})$, and  the distance between vehicle $j$ and vehicle $i$ is defined as $d_{ji}=\| R_{j}^{T}(p_{i}-p_{j}) \|=\| p_{i}-p_{j} \|$.
The relative configuration of vehicle $i$ with respect to vehicle $0$ is defined as $g_{0i}=g_{0}^{-1}g_{i}$. Motion properties in a mobile  formation  for a group of non-holonomic vehicles are summarized  in Propositions $\ref{pop1}$, $\ref{popp}$, $\ref{pop2}$ and $\ref{popweak}$.

\begin{proposition} \label{pop1}
Consider a desired relative configuration of vehicle $i$ with respect to vehicle $0$  denoted by $\bar{g}_{0i}\in SE(2)$  in $(\ref{g0i})$, where $\bar{x}_{0i}^{0}$ and $\bar{y}_{0i}^{0}$ are constants which depend on the formation task, and  $\bar{\theta}_{0i}$  satisfies the equation $(\ref{theta0i})$.
\begin{equation} \label{g0i}
\bar{g}_{0i}=\left[
               \begin{matrix}
              \cos\bar{\theta}_{0i} & -\sin\bar{\theta}_{0i} & \bar{x}_{0i}^{0} \\
               \sin\bar{\theta}_{0i} & \cos\bar{\theta}_{0i} & \bar{y}_{0i}^{0} \\
               0 & 0 & 1
               \end{matrix}
               \right]
\end{equation}
\begin{equation} \label{theta0i}
\bar{\theta}_{0i}=\mathrm{atan2}(\omega_{0}\bar{x}_{0i}^{0}, v_{0}-\omega_{0}\bar{y}_{0i}^{0})
\end{equation}
Then the following properties hold.

(i). The trajectory $\tilde{g}_{i}=g_{0}\bar{g}_{0i}$ describes the vehicle $i$'s adjoint orbit.

(ii). The adjoint orbit for vehicle $i$ satisfies the following kinematic equation
\begin{equation} \label{g0i2}
\dot{\tilde{g}}_{i}=\tilde{g}_{i}(\mathrm{Ad}_{\bar{g}_{0i}^{-1}}\hat{\xi}_{0}+\hat{\bar{\xi}}_{0i})
\end{equation}
where $(\mathrm{Ad}_{\bar{g}_{0i}^{-1}}\hat{\xi}_{0}+\hat{\bar{\xi}}_{0i})$ is vehicle $i$'s adjoint velocity, and $\bar{\xi}_{0i}=[\bar{\omega}_{0i},0,0]^{T}$   satisfies the following kinematic equation
\begin{equation} \label{g0iequation}
\dot{\bar{g}}_{0i}=\bar{g}_{0i}\hat{\bar{\xi}}_{0i}
\end{equation}
where the angular speed is $\bar{\omega}_{0i}=\dot{\bar{\theta}}_{0i}$, and the heading error between vehicle $i$ and vehicle $0$ satisfies  $(\ref{theta0i})$.

(iii). For the case $\bar{x}_{0i}^{0}\neq 0$ and $\omega_{0}\neq 0$, the heading error $\bar{\theta}_{0i}$ cannot be zero.

(iv). For the case $\bar{x}_{0i}^{0}= 0$ or $\omega_{0}= 0$, the heading error $\bar{\theta}_{0i}\in\{0,\pi\}$, which corresponds  to a parallel formation or a translational straight line formation.
\end{proposition}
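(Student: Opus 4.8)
The plan is to treat the disjunction ``$\bar{x}_{0i}^{0}=0$ or $\omega_{0}=0$'' as two separate cases and, in each, evaluate the defining relation $(\ref{theta0i})$ directly by exploiting the behaviour of $\mathrm{atan2}(\cdot,\cdot)$ when its first argument vanishes. The one fact that drives everything is that $\mathrm{atan2}(0,c)$ equals $0$ when $c>0$ and equals $\pi$ when $c<0$; in both cases of the proposition the first slot $\omega_{0}\bar{x}_{0i}^{0}$ of $(\ref{theta0i})$ is forced to zero, so the admissible values of $\bar{\theta}_{0i}$ collapse onto $\{0,\pi\}$. This is the mirror image of part (iii), where the product $\omega_{0}\bar{x}_{0i}^{0}$ was shown to be non-zero and hence $\bar{\theta}_{0i}\neq 0$.

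First I would take $\bar{x}_{0i}^{0}=0$. Then $\omega_{0}\bar{x}_{0i}^{0}=0$ irrespective of $\omega_{0}$, so $(\ref{theta0i})$ reduces to $\bar{\theta}_{0i}=\mathrm{atan2}(0,\,v_{0}-\omega_{0}\bar{y}_{0i}^{0})$, which equals $0$ or $\pi$ according to the sign of $v_{0}-\omega_{0}\bar{y}_{0i}^{0}$ (the latter being achievable when $\omega_{0}\bar{y}_{0i}^{0}>v_{0}$). A value of $0$ means synchronized headings and a value of $\pi$ means anti-synchronized headings; together with the vanishing longitudinal offset $\bar{x}_{0i}^{0}=0$ and a non-zero transverse offset $\bar{y}_{0i}^{0}$, this is precisely the definition of a parallel formation. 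Next I would take $\omega_{0}=0$. Again the first slot is zero, and now Assumption $\ref{desiredw}$ supplies $v_{0}>0$, so the second slot $v_{0}-\omega_{0}\bar{y}_{0i}^{0}=v_{0}$ is strictly positive; hence $\bar{\theta}_{0i}=\mathrm{atan2}(0,v_{0})=0$. The headings are synchronized and, $\bar{\theta}_{0i}$ being constant, $\bar{\omega}_{0i}=\dot{\bar{\theta}}_{0i}=0$.

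To certify that the $\omega_{0}=0$ branch is a \emph{translational straight line formation} rather than merely a statement about the pair $(0,i)$, I would read off the full motion from the adjoint-orbit kinematics $(\ref{g0i2})$. The angular speed of vehicle $i$ is the angular component of $\mathrm{Ad}_{\bar{g}_{0i}^{-1}}\hat{\xi}_{0}+\hat{\bar{\xi}}_{0i}$, and since the adjoint action on $SE(2)$ preserves the angular part (in the plane $R\hat{\omega}R^{T}=\hat{\omega}$), this equals $\omega_{0}+\bar{\omega}_{0i}=0+0=0$; every vehicle therefore has zero angular speed, while the constancy of $\bar{g}_{0i}$ keeps $\bar{x}_{0i}^{0},\bar{y}_{0i}^{0}$ fixed, giving a rigid group translating along a common direction. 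The main obstacle I anticipate is not the trigonometric reduction, which is immediate, but the bookkeeping that matches each computed value of $\bar{\theta}_{0i}$ to the correct named formation: one must check that the offset-sign conditions in the definitions of parallel and translational straight line formations are actually met, and must invoke $(\ref{g0i2})$ together with the adjoint-invariance of the angular velocity to confirm that the whole formation, not just one relative configuration, moves in the claimed manner.
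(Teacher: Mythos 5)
Your argument for part (iv) is sound and in places more careful than the paper's own one-line treatment: you correctly reduce both branches to $\mathrm{atan2}(0,c)$, use Assumption~\ref{desiredw} to pin down $\bar{\theta}_{0i}=0$ when $\omega_{0}=0$, check the longitudinal/transverse offset conditions against the definitions of the two named formations, and your appeal to the adjoint-invariance of the angular component to get $\omega_{i}=\omega_{0}+\bar{\omega}_{0i}=0$ agrees with the paper's explicit formula for the adjoint velocity.

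The difficulty is that you have proved only one quarter of the proposition. The statement comprises claims (i)--(iv), and its substantive content is (ii): one must compute the body velocity of the candidate orbit $\tilde{g}_{i}=g_{0}\bar{g}_{0i}$, namely $\tilde{g}_{i}^{-1}\dot{\tilde{g}}_{i}=\mathrm{Ad}_{\bar{g}_{0i}^{-1}}\hat{\xi}_{0}+\hat{\bar{\xi}}_{0i}$, write out its three components, and observe that the lateral component $-(v_{0}-\omega_{0}\bar{y}_{0i}^{0})\sin\bar{\theta}_{0i}+\omega_{0}\bar{x}_{0i}^{0}\cos\bar{\theta}_{0i}$ must vanish for the orbit to be realizable by a non-holonomic vehicle; it is precisely this constraint that forces $(\ref{theta0i})$. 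In your write-up $(\ref{theta0i})$ is taken as a given definition and $(\ref{g0i2})$ is invoked as a known fact, so the step that explains \emph{why} the heading error must satisfy $(\ref{theta0i})$ at all --- the step on which (iii) and (iv) actually rest --- is absent. You also defer to part (iii) as ``already shown'' without proving it. To complete the proof you would need to add: (i), which is immediate from Definition~\ref{adjoint}; the derivation of $(\ref{g0i2})$ together with the vanishing-lateral-velocity argument for (ii); and, for (iii), the observation that a nonzero first argument of $\mathrm{atan2}$ excludes $\bar{\theta}_{0i}=0$.
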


\begin{proof}
Once a mobile formation shape is achieved as shown in  Fig.~$\ref{formationfig}$, the matrix $\bar{g}_{0i}$ represents the desired configuration of vehicle $i$ expressed in the body frame of vehicle $0$. Based on  Definition $\ref{adjoint}$, the trajectory $\tilde{g}_{i}=g_{0}\bar{g}_{0i}$ represents vehicle $i$'s adjoint orbit. This proves (i).

Denote $\xi_{0}=[\omega_{0}, v_{0},0]^{T}$ and $\xi_{i}=[\omega_{i},v_{i},0]^{T}$, then  the kinematics of the adjoint orbit satisfy $(\ref{g0i2})$.
The adjoint velocity $(\mathrm{Ad}_{\bar{g}_{0i}^{-1}}\hat{\xi}_{0}+\hat{\bar{\xi}}_{0i})$ is rewritten as
\begin{equation} \label{adspeed}
(\mathrm{Ad}_{g_{0i}^{-1}}\hat{\xi}_{0}+\hat{\bar{\xi}}_{0i})^{\vee}=\left[
               \begin{matrix}
              \omega_{0}+\bar{\omega}_{0i} \\
               (v_{0}-\omega_{0}\bar{y}_{0i}^{0})\cos\bar{\theta}_{0i}+ \omega_{0}\bar{x}_{0i}^{0}\sin\bar{\theta}_{0i} \\
               -(v_{0}-\omega_{0}\bar{y}_{0i}^{0})\sin\bar{\theta}_{0i}+ \omega_{0}\bar{x}_{0i}^{0}\cos\bar{\theta}_{0i}
               \end{matrix}
               \right]
\end{equation}
To ensure that the adjoint orbit  satisfies the non-holonomic constraint, the third component of the velocity is required  to be zero, i.e.,
\begin{equation} \label{conditionlai}
-(v_{0}-\omega_{0}\bar{y}_{0i}^{0})\sin\bar{\theta}_{0i}+ \omega_{0}\bar{x}_{0i}^{0}\cos\bar{\theta}_{0i}=0
\end{equation}
Therefore, the condition $(\ref{theta0i})$ is obtained, which proves (ii).

The condition $(\ref{theta0i})$  shows that the two vehicles cannot have the same headings (i.e., $\bar{\theta}_{0i}\neq 0$) when $\bar{x}_{0i}^{0}\neq 0$ and $\omega_{0}\neq 0$. To keep the relative position fixed, the heading cannot be arbitrarily specified as in the holonomic case {\cite{dong2013}}, while the relative heading $\bar{\theta}_{0i}$ is determined by the relative position and coordinating speeds for non-holonomic vehicles. This proves (iii).

In the cases of $\omega_{0}=0$ or $\bar{x}_{0i}^{0} = 0$, one has $\bar{\theta}_{0i}\in\{0,\pi\}$ from $(\ref{theta0i})$. This
implies that the two vehicles have synchronized or anti-synchronized heading which
corresponds to a parallel formation or a translational straight line formation. This proves (iv).
\end{proof}

\setlength{\abovecaptionskip}{0.cm}
\setlength{\belowcaptionskip}{-0.cm}
\begin{figure}[!htb]
\centering
\includegraphics[width=2.2in]{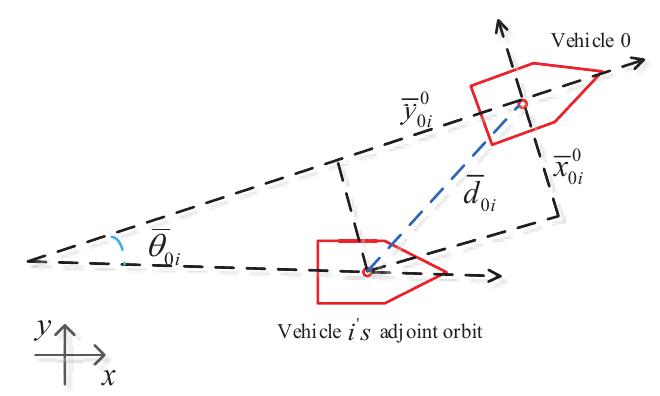}
\caption{Illustration of  relative configuration and adjoint orbit for two non-holonomic vehicles in a mobile formation. }
\label{formationfig}
\end{figure}


The relative configuration $\bar{g}_{0i}$ in $(\ref{g0i})$ $(\ref{theta0i})$ will be used to determine the adjoint orbit of vehicle $i$.  Regarding the condition $(\ref{theta0i})$, the following property holds.
\begin{proposition} \label{popp}
Suppose the vehicle $i$ moves along its adjoint orbit $\tilde{g}_{i}$ determined by $g_{0}\bar{g}_{0i}$.  Then the linear speed $v_{i}$ is positive if $v_{0}>0$.
\end{proposition}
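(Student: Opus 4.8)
The plan is to read off the linear speed $v_i$ directly from the adjoint velocity already computed in Proposition~\ref{pop1}, and then to show that the particular choice of heading error $\bar\theta_{0i}$ in $(\ref{theta0i})$ forces this speed to be the \emph{positive} square root of a sum of squares. On the adjoint orbit the body-frame velocity of vehicle $i$ is the vee of $\mathrm{Ad}_{\bar g_{0i}^{-1}}\hat\xi_0+\hat{\bar\xi}_{0i}$, whose components are listed in $(\ref{adspeed})$. Since for a velocity in $\mathfrak{se}(2)$ the second entry is exactly the forward speed $v=\nu_x$, I would identify
\begin{equation*}
v_i=(v_0-\omega_0\bar y_{0i}^0)\cos\bar\theta_{0i}+\omega_0\bar x_{0i}^0\sin\bar\theta_{0i},
\end{equation*}
while the non-holonomic constraint $(\ref{conditionlai})$ asserts that the third entry,
$-(v_0-\omega_0\bar y_{0i}^0)\sin\bar\theta_{0i}+\omega_0\bar x_{0i}^0\cos\bar\theta_{0i}$, vanishes.

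First I would combine these two expressions by squaring and adding them. With $a=\omega_0\bar x_{0i}^0$ and $b=v_0-\omega_0\bar y_{0i}^0$, the cross terms cancel via $\cos^2+\sin^2=1$, and one obtains the clean identity
\begin{equation*}
v_i^2=(v_0-\omega_0\bar y_{0i}^0)^2+(\omega_0\bar x_{0i}^0)^2,
\end{equation*}
so that $v_i=\pm\sqrt{(v_0-\omega_0\bar y_{0i}^0)^2+(\omega_0\bar x_{0i}^0)^2}$. The decisive step is to pin down the sign. Here I would invoke the definition $(\ref{theta0i})$: by the convention of $\mathrm{atan2}$, the angle $\bar\theta_{0i}$ satisfies $\cos\bar\theta_{0i}=b/r$ and $\sin\bar\theta_{0i}=a/r$ with $r=\sqrt{a^2+b^2}$. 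Substituting these back into the expression for $v_i$ yields $v_i=(b^2+a^2)/r=r\ge 0$, i.e. the positive root is the one selected by $(\ref{theta0i})$.

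Finally, I would argue that $r>0$ whenever $v_0>0$. The radicand vanishes only if $\omega_0\bar x_{0i}^0=0$ and $v_0=\omega_0\bar y_{0i}^0$ hold simultaneously, which forces vehicle $i$ to sit exactly on the instantaneous centre of rotation of vehicle $0$; this degenerate placement is excluded because $\bar\theta_{0i}$ in $(\ref{theta0i})$ is then undefined and the vehicle would carry no admissible heading. Away from this case $r>0$, hence $v_i>0$, which completes the argument.

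I expect the main obstacle to be the sign determination rather than the algebra. The constraint $(\ref{conditionlai})$ by itself is satisfied by both $\bar\theta_{0i}$ and $\bar\theta_{0i}+\pi$, and only the former yields forward motion while the latter gives $v_i=-r$; making explicit that the $\mathrm{atan2}$ formula $(\ref{theta0i})$ singles out the forward-motion branch, together with cleanly disposing of the rotation-centre degeneracy, is the crux. Everything else reduces to the Pythagorean cancellation above.
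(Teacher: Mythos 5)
Your proof is correct and follows the same basic route as the paper's --- read $v_i$ off as the second component of the adjoint velocity $(\ref{adspeed})$ and use the $\mathrm{atan2}$ definition $(\ref{theta0i})$ to fix the sign --- but your execution of the sign step is cleaner. The paper rewrites $v_i$ as $\frac{(v_0-\omega_0\bar y_{0i})^2+(\omega_0\bar x_{0i})^2}{v_0-\omega_0\bar y_{0i}}\cos\bar\theta_{0i}$, which divides by two quantities that may vanish, and is therefore forced into a quadrant-by-quadrant case analysis plus a separate treatment of the coordinate axes. Your substitution $\cos\bar\theta_{0i}=b/r$, $\sin\bar\theta_{0i}=a/r$ with $r=\sqrt{a^2+b^2}$ gives $v_i=(a^2+b^2)/r=r>0$ in one line and makes all those cases unnecessary; it also makes explicit the point the paper leaves implicit, namely that the constraint $(\ref{conditionlai})$ alone admits both $\bar\theta_{0i}$ and $\bar\theta_{0i}+\pi$ and that $(\ref{theta0i})$ is what selects the forward branch. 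Your disposal of the degenerate case $r=0$ matches the paper's remark excluding $\mathrm{atan2}(0,0)$ (note only that when $\omega_0=0$ one has $b=v_0>0$ directly, so the ``centre of rotation'' picture is only needed for $\omega_0\neq 0$). No gaps.
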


\begin{proof}
Since the vehicle $i$ moves along its adjoint orbit, from the expression of adjoint velocity $(\ref{adspeed})$, one obtains
\begin{equation} \label{advi}
v_{i}=(v_{0}-\omega_{0}\bar{y}_{0i})\cos\bar{\theta}_{0i}+ \omega_{0}\bar{x}_{0i}\sin\bar{\theta}_{0i}
\end{equation}
From the condition  \eqref{conditionlai}, one has  $\frac{\sin\bar{\theta}_{0i}}{\cos\bar{\theta}_{0i}}=\frac{\omega_{0}\bar{x}_{0i}}{v_{0}-\omega_{0}\bar{y}_{0i}}$. To proceed, from $(\ref{advi})$, one obtains
\begin{align*}
\frac{v_{i}}{\cos\bar{\theta}_{0i}} &=(v_{0}-\omega_{0}\bar{y}_{0i})+ \omega_{0}\bar{x}_{0i}\frac{\sin\bar{\theta}_{0i}}{\cos\bar{\theta}_{0i}} \\
&=(v_{0}-\omega_{0}\bar{y}_{0i})+ \omega_{0}\bar{x}_{0i}\frac{\omega_{0}\bar{x}_{0i}}{(v_{0}-\omega_{0}\bar{y}_{0i})} \\
&= \frac{(v_{0}-\omega_{0}\bar{y}_{0i})^2+(\omega_{0}\bar{x}_{0i})^2}{(v_{0}-\omega_{0}\bar{y}_{0i})}
\end{align*}
Thus,  the linear speed $(\ref{advi})$ could be rewritten as
\begin{align} \label{advizf}
v_{i} &=\frac{(v_{0}-\omega_{0}\bar{y}_{0i})^2+(\omega_{0}\bar{x}_{0i})^2}{(v_{0}-\omega_{0}\bar{y}_{0i})}\cos\bar{\theta}_{0i}
\end{align}
 Based on the definition of function $\mathrm{atan2}$, we consider four cases to discuss the sign of $v_{i}$.

(i) In the first quartile: $\omega_{0}\bar{x}_{0i}>0$ and $v_{0}-\omega_{0}\bar{y}_{0i}>0$;

(ii) In the second quartile: $\omega_{0}\bar{x}_{0i}>0$ and $v_{0}-\omega_{0}\bar{y}_{0i}<0$;


(iii) In the fourth quartile:  $\omega_{0}\bar{x}_{0i}<0$ and $v_{0}-\omega_{0}\bar{y}_{0i}>0$.

(iv) In the coordinate axis except for the origin coordinate: $\omega_{0}\bar{x}_{0i}=0$ and $v_{0}-\omega_{0}\bar{y}_{0i}\neq0$  (or  $\omega_{0}\bar{x}_{0i}\neq0$ and $v_{0}-\omega_{0}\bar{y}_{0i}=0$  ).

For the case of (i), one obtains $\cos\bar{\theta}_{0i}>0$, which implies that $v_{i}>0$ from equation $(\ref{advizf})$. The same analysis can be used for cases of (ii) and (iii). For the case of (iv), from $(\ref{advi})$, one has $v_{i}=v_{0}-\omega_{0}\bar{y}_{0i}>0$ if  $\omega_{0}\bar{x}_{0i}=0$ and $v_{0}-\omega_{0}\bar{y}_{0i}>0$ (since $\bar{\theta}_{0i}=0$), and other situations  (i.e. $\bar{\theta}_{0i}=\{ \pi/2, \pi, 3\pi/2 \}$) can be proceeded with the same analysis. This completes the proof.

\end{proof}

\begin{rmk}
According to the singularity of $\mathrm{atan2}(0,0)$,  the cases including $\bar{x}_{0i}=0,\bar{y}_{0i}=v_{0}/\omega_{0}$ (which can be avoided by the defined formation task) and $\omega_{0}=0,v_{0}=0$ (which represents a static leader) are not considered in this paper.  The relative configuration $\bar{g}_{0i}$ in $(\ref{g0i})$ and $(\ref{theta0i})$ not only  determines adjoint orbit of vehicle $i$, but also guarantees the forward motion (i.e. positive linear speed). Since the linear speed of adjoint velocity is positive, the mobile formation could be achieved for coordinating multiple  fixed-wing UAVs to ensure  they move forward along their adjoint orbits with the corresponding  adjoint velocities.
\end{rmk}

\begin{rmk}
 If $v_{0}<0$, then the linear speed $v_{i}$ of the adjoint orbit which is determined by $\bar{g}_{0i}$ in $(\ref{g0i})$ $(\ref{theta0i})$ is positive based on the same analysis of Proposition $\ref{popp}$. In practice, some applications may also demand  that $v_{i}<0$ for the case $v_{0}<0$. To tackle this problem, based on the same analysis of Proposition $\ref{popp}$, the relative attitude $\bar{\theta}_{0i}$ can be redefined as $\bar{\theta}_{0i}=\arctan(\frac{\omega_{0}\bar{x}_{0i}^{0}}{v_{0}-\omega_{0}\bar{y}_{0i}^{0}})$ for the case of $v_{0}-\omega_{0}\bar{y}_{0i}<0$, and  $\bar{\theta}_{0i}$ is redefined as $\bar{\theta}_{0i}=\arctan(\frac{\omega_{0}\bar{x}_{0i}^{0}}{v_{0}-\omega_{0}\bar{y}_{0i}^{0}})+\pi$ for the case of $v_{0}-\omega_{0}\bar{y}_{0i}>0$. In addition, there exists a singularity at $v_{0}-\omega_{0}\bar{y}_{0i}=0$ if we use $\bar{\theta}_{0i}=\arctan(\frac{\omega_{0}\bar{x}_{0i}^{0}}{v_{0}-\omega_{0}\bar{y}_{0i}^{0}})$, and this can be avoided by the defined formation task.
\end{rmk}


  Propositions $\ref{pop1}$ and $\ref{popp}$ present details and calculations about the non-holonomic vehicle $i$'s adjoint orbit. Based on the results of adjoint orbit, the mobile formation with weak/strict rigid-body motion will be discussed. For a mobile formation  system involving multiple vehicles, the properties of a  mobile  formation with strict rigid-body motion are presented firstly.

\begin{proposition} \label{pop2}
For a  networked mobile formation control system with multiple non-holonomic vehicles, suppose  each  vehicle $i$ moves along its adjoint orbit $\tilde{g}_{i}$ determined by $g_{0}\bar{g}_{0i}$. Then the following properties hold.

(i). Except for the parallel formation and translational straight line formation, a mobile formation with strict rigid-body motion  can be achieved if and only if the speed ratios $v_{i}/\omega_{i}$ ($i=0,1,\cdots,n$) for each individual vehicle  are constants.

(ii). For the case of  parallel formation (i.e., $\bar{x}_{0i}^{0}= 0$), the vehicle 0 which guides the mobile formation can move with any bounded $v_{0},\omega_{0}$.

(iii) For the case of  translational straight line formation (i.e., $\omega_{0}= 0$), the vehicle 0 which guides the mobile formation can move with any bounded $v_{0}$.

(iv). The adjoint orbit  $g_{0}\bar{g}_{0i}$ can be also expressed as $g_{j}\bar{g}_{ji}$, which implies that the vehicle $i$'s adjoint orbit is unique in the inertial frame $\mathcal{F}_{I}$.

(v). To  maintain a strict rigid-body motion in mobile formation, the linear speeds $v_{i}$ for each  individual vehicles are not identical except for the translational straight line formation.

\end{proposition}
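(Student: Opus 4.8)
The plan is to reduce all five claims to the explicit adjoint-velocity expressions obtained in Proposition~\ref{pop1}. The starting observation is that a strict rigid-body motion is equivalent to all pairwise relative configurations $g_{ji}$ being constant; since each vehicle rides its adjoint orbit $\tilde{g}_i=g_0\bar{g}_{0i}$ with $\bar{g}_{0i}$ constant by construction, one computes $g_{ji}=g_j^{-1}g_i=\bar{g}_{0j}^{-1}\bar{g}_{0i}$, so the whole formation is rigid precisely when every $\bar{g}_{0i}$ (equivalently every heading error $\bar{\theta}_{0i}$) stays constant, in which case $\bar{\omega}_{0i}=\dot{\bar{\theta}}_{0i}=0$ and $\omega_i=\omega_0$ for all $i$. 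Combining the two nontrivial components of $(\ref{adspeed})$ with the constraint $(\ref{conditionlai})$ and the definition $(\ref{theta0i})$, I would first derive the compact closed form $v_i=\sqrt{(v_0-\omega_0\bar{y}_{0i})^2+(\omega_0\bar{x}_{0i})^2}$ and, using $\omega_i=\omega_0$, the key identity
\begin{equation*}
\frac{v_i}{\omega_i}=\sqrt{\left(\frac{v_0}{\omega_0}-\bar{y}_{0i}\right)^2+\bar{x}_{0i}^2}.
\end{equation*}
Geometrically this is just the distance from vehicle $i$'s fixed body-frame position $(\bar{x}_{0i},\bar{y}_{0i})$ to the instantaneous rotation centre $(0,v_0/\omega_0)$ of vehicle $0$, i.e. vehicle $i$'s turning radius, and this interpretation makes every subsequent claim transparent.

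For part (i), I would argue the equivalence through $\bar{\theta}_{0i}$. If the motion is rigid then each $\bar{\theta}_{0i}$ is constant, and $(\ref{theta0i})$ gives $\tan\bar{\theta}_{0i}=\bar{x}_{0i}/(v_0/\omega_0-\bar{y}_{0i})$; outside the parallel ($\bar{x}_{0i}=0$) and translational straight-line ($\omega_0=0$) cases this forces $v_0/\omega_0$ to be constant, whence the displayed identity makes every $v_i/\omega_i$ constant. Conversely, constancy of all $v_i/\omega_i$ (in particular of $v_0/\omega_0$) keeps each $\bar{\theta}_{0i}$ fixed, so each $\bar{g}_{0i}$ and hence each $g_{ji}=\bar{g}_{0j}^{-1}\bar{g}_{0i}$ is constant, giving strict rigid-body motion; positivity of the resulting $v_i$ is inherited from Proposition~\ref{popp}.

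For parts (ii) and (iii) I would revisit the non-holonomic constraint $(\ref{conditionlai})$ in the two degenerate regimes. When $\bar{x}_{0i}=0$ it reads $-(v_0-\omega_0\bar{y}_{0i})\sin\bar{\theta}_{0i}=0$, and when $\omega_0=0$ it reads $-v_0\sin\bar{\theta}_{0i}=0$; in both cases it is satisfied by $\bar{\theta}_{0i}\in\{0,\pi\}$ \emph{independently} of the magnitudes of $v_0,\omega_0$. Hence the heading errors, and with them all $\bar{g}_{0i}$, remain constant for arbitrary bounded $v_0,\omega_0$ (resp. arbitrary bounded $v_0$), so vehicle $0$ retains full freedom while rigidity is preserved. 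Part (iv) is a direct group computation: since vehicle $j$ also rides its adjoint orbit, $g_j=g_0\bar{g}_{0j}$, so with $\bar{g}_{ji}:=\bar{g}_{0j}^{-1}\bar{g}_{0i}$ (constant) one gets $g_j\bar{g}_{ji}=g_0\bar{g}_{0j}\bar{g}_{0j}^{-1}\bar{g}_{0i}=g_0\bar{g}_{0i}=\tilde{g}_i$, so the orbit is independent of the reference vehicle and therefore unique in $\mathcal{F}_I$; I would also check that $\bar{\theta}_{ji}=\bar{\theta}_{0i}-\bar{\theta}_{0j}$ obeys the analogue of $(\ref{theta0i})$ for vehicle $j$, which follows from $\omega_i=\omega_j=\omega_0$. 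Finally, part (v) follows from the closed form for $v_i$: the equality $v_i=v_j$ forces $(v_0/\omega_0-\bar{y}_{0i})^2+\bar{x}_{0i}^2=(v_0/\omega_0-\bar{y}_{0j})^2+\bar{x}_{0j}^2$, i.e. the two vehicles are equidistant from the rotation centre, so in general $v_i\neq v_j$, whereas the speeds coincide for every pair when $\omega_0=0$, where the formula collapses to $v_i=v_0$ — the translational straight-line formation.

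The main obstacle I anticipate is not the algebra but the bookkeeping around $\mathrm{atan2}$: to turn ``$\bar{\theta}_{0i}$ constant'' cleanly into ``$v_0/\omega_0$ constant'' I must ensure the two arguments of $\mathrm{atan2}$ keep fixed signs (so no branch jump occurs) and track the sign of $\omega_0$ when dividing through, which is exactly why the parallel and straight-line cases—where one argument vanishes and the heading locks onto $\{0,\pi\}$—must be separated out as the genuine exceptions. A secondary point needing care in (v) is making ``not identical'' precise: the speeds fail to coincide for any formation whose vehicles do not all lie on a common circle centred at the rotation centre, and such a degenerate alignment is the only way equal speeds could occur with $\omega_0\neq0$.
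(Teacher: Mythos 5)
Your proposal is correct and follows essentially the same route as the paper: reduce strict rigidity to constancy of each $\bar{\theta}_{0i}$ via $\bar{g}_{ij}=\bar{g}_{0i}^{-1}\bar{g}_{0j}$, invoke $(\ref{theta0i})$ to tie this to the constancy of $v_{0}/\omega_{0}$ outside the parallel and straight-line cases, and read off (ii)--(v) from the adjoint velocity $(\ref{adspeed})$ and the group identity $g_{j}\bar{g}_{ji}=g_{0}\bar{g}_{0i}$. Your closed form $v_{i}=\sqrt{(v_{0}-\omega_{0}\bar{y}_{0i})^{2}+(\omega_{0}\bar{x}_{0i})^{2}}$ is just a repackaging of the computation already used in Proposition~\ref{popp}, and your caveat about vehicles equidistant from the rotation centre in (v) is a legitimate refinement of a case the paper's own proof passes over.
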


\begin{proof}
When each vehicle $i$ moves along its adjoint orbit $\tilde{g}_{i}$, the configuration of vehicle $i$ is given by $g_{i}=\tilde{g}_{i}=g_{0}\bar{g}_{0i}$. Then, the relative configuration of  vehicle $j$ with respect to vehicle $i$ is given by
\begin{equation} \label{ggij}
\bar{g}_{ij}=g_{i}^{-1}g_{j}=(g_{0}\bar{g}_{0i})^{-1}g_{0}\bar{g}_{0j}=\bar{g}_{0i}^{-1}\bar{g}_{0j}
\end{equation}
where $\bar{R}_{ij}=\bar{R}_{0i}^{T}\bar{R}_{0j}$, $\bar{p}_{ij}^{i}=\bar{R}_{0i}^{T}(\bar{p}_{0j}^{0}-\bar{p}_{0i}^{0})$. Thus, the relative position $\bar{p}_{ij}^{i}$ for any two vehicles is kept constant if and only if the rotation matrix $\bar{R}_{0i}$ is constant, which implies that the angle $\bar{\theta}_{0i}$ is constant. From equation $(\ref{theta0i})$, one can show that the angle $\bar{\theta}_{0i}$ is constant if and only if   the   speed   ratio $v_{0}/\omega_{0}$ is constant except for the cases of parallel formation and translational straight line formation.  From the adjoint velocity $(\mathrm{Ad}_{\bar{g}_{0i}^{-1}}\hat{\xi}_{0}+\hat{\bar{\xi}}_{0i})$ in $(\ref{adspeed})$, one has that the  speed  ratios $v_{i}/\omega_{i}$  are constants.
Furthermore, the distance between  vehicle $i$ and vehicle $j$ is $\bar{d}_{ij}=\|\bar{p}_{ij}^{i}\|=\|\bar{p}_{0j}^{0}-\bar{p}_{0i}^{0}\|$ which implies $\bar{d}_{ij}$ is a constant.
 Based on Definition $\ref{def:fixed_formation}$, it proves (i).

For the case of $\bar{x}_{0i}^{0}= 0$, the heading error is $\bar{\theta}_{0i}=\{0,\pi\}$ no matter whether the speeds $v_{0}, \omega_{0}$ are time-varying or constants. From  $(\ref{adspeed})$, one concludes that $\omega_{i}=\omega_{0}$ and $v_{i}=v_{0}-\omega_{0}\bar{y}_{0i}^{0}$ for vehicle $i,i=1,\cdots,n$ and  the speed ratios $v_{i}/\omega_{i}$ for each vehicle ($i=0,1,\cdots,n$) do not have to be constants. This proves (ii).

For the case of $\omega_{0}= 0$, the heading error is $\bar{\theta}_{0i}=\{0,\pi\}$ no matter whether the speed $v_{0}$ is time-varying or constants. From $(\ref{adspeed})$, one concludes that $\omega_{i}=0$ and $v_{i}=v_{0}$ for vehicle $i,i=1,\cdots,n$. This proves (iii).

If the mobile rigid formation is achieved, one has $g_{j}=g_{0}\bar{g}_{0j}$. Therefore, there holds $g_{j}\bar{g}_{ji}=g_{0}\bar{g}_{0j}\bar{g}_{ij}^{-1}=g_{0}\bar{g}_{0i}$. This shows (iv).

To maintain a mobile   formation with strict rigid-body motion, the linear speed  of vehicle $i$ is $v_{i}= (v_{0}-\omega_{0}\bar{y}_{0i}^{0})\cos\bar{\theta}_{0i}+ \omega_{0}\bar{x}_{0i}^{0}\sin\bar{\theta}_{0i}$, according to the vehicle $i$'s adjoint velocity. For the translational straight line formation, one has that the  speeds of all vehicles are $v_{0}$ and $\omega_{0}=0$, which shows that the speeds of all individual vehicles are the same.  However, for all other mobile  formations with strict rigid-body motion, the linear speeds are not the same for each individual vehicle.  This proves (v).
\end{proof}

\begin{rmk}
In Proposition \ref{pop2}, expect for the cases of parallel formation and translational straight line formation, a mobile formation with strict rigid-body motion is achieved if and only if all vehicles move along a circular motion.
\end{rmk}

\begin{rmk}
The seminal paper {\cite{Morbiditac}} firstly proposed coordination controllers to regulate multiple non-holonomic vehicles in a  formation moving as a rigid body. In this paper, we have extended the results of {\cite{Morbiditac}} in the following aspects. Firstly,  compared with the sufficient condition proposed in {\cite{Morbiditac}}, in this paper by the defined adjoint orbit, the proposed condition for  strict rigid-body motion is necessary and sufficient. Specifically speaking, the angular speed in  {\cite{Morbiditac}} is assumed to be constant for achieving a target formation, but we demonstrate a weaker condition that the ratio of linear speed to angular speed is constant except for the cases of parallel formation and translational straight line formation. Secondly, the relative positions and attitude of all inter-vehicles are analyzed thoroughly instead of  relative positions and attitudes in one common leader's coordinate frame as in {\cite{Morbiditac}}.
In addition, the condition of mobile formation with weak rigid-body motion is also considered which will be discussed in the sequel.
\end{rmk}

Regarding  mobile formation coordination under weak rigid-body motion in Definition~\ref{def:weak_fixed_formation}, we give the following proposition. Without loss of generality, we denote vehicle 0 as the common vehicle.
\begin{proposition} \label{popweak}
For a  networked formation system with multiple non-holonomic vehicles, if  each  vehicle $i$ moves along its adjoint orbit $\tilde{g}_{i}$ determined by $g_{0}\bar{g}_{0i}$, then the following properties hold.

(i). The mobile formation with weak rigid-body motion can be achieved for any bounded speeds $v_{0},\omega_{0}$, where vehicle $0$  guides the mobile formation.

(ii).  The relative positions and attitudes between  any  two  vehicles (except for the common vehicle) do  not  have  to  be constant.

(iii). The inter-vehicle distances are constant, (i.e., the formation shape is rigid).

(iv). To maintain  a mobile formation with weak rigid-body motion, the speeds $v_{i},\omega_{i}$ for each individual vehicles are not identical except for the translational straight line formation.
\end{proposition}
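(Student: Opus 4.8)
The plan is to follow the same structure as the proof of Proposition~\ref{pop2}, exploiting the fact that when vehicle $i$ moves along its adjoint orbit one has $g_{0i}=g_{0}^{-1}g_{i}=g_{0}^{-1}g_{0}\bar{g}_{0i}=\bar{g}_{0i}$, so that the realized relative configuration of vehicle $i$ with respect to the common vehicle $0$ coincides with the prescribed $\bar{g}_{0i}$ of $(\ref{g0i})$. For part (i), I would read off the translational block of $\bar{g}_{0i}$, namely $\bar{p}_{0i}^{0}=[\bar{x}_{0i}^{0},\bar{y}_{0i}^{0}]^{T}$, which is constant by construction irrespective of the values of $v_{0}$ and $\omega_{0}$; only the heading $\bar{\theta}_{0i}$ adapts through $(\ref{theta0i})$ so as to respect the non-holonomic constraint. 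Since the relative position in vehicle $0$'s body-fixed frame stays fixed, Definition~\ref{def:weak_fixed_formation} is met for any bounded $v_{0},\omega_{0}$, which establishes (i). The essential contrast with the strict case is that here only the position block, not the attitude block, of $\bar{g}_{0i}$ is required to be constant, so $v_{0}/\omega_{0}$ need not be constant.

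For parts (ii) and (iii), I would invoke the inter-vehicle relation $(\ref{ggij})$, $\bar{g}_{ij}=\bar{g}_{0i}^{-1}\bar{g}_{0j}$, which yields $\bar{R}_{ij}=\bar{R}_{0i}^{T}\bar{R}_{0j}$ and $\bar{p}_{ij}^{i}=\bar{R}_{0i}^{T}(\bar{p}_{0j}^{0}-\bar{p}_{0i}^{0})$. The key observation is that $\bar{p}_{0j}^{0}-\bar{p}_{0i}^{0}$ is a constant vector, but $\bar{R}_{0i}$ rotates by the angle $\bar{\theta}_{0i}$, which is time-varying whenever $v_{0},\omega_{0}$ are time-varying. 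Consequently $\bar{p}_{ij}^{i}$, together with $\bar{R}_{ij}$ (governed by $\bar{\theta}_{0j}-\bar{\theta}_{0i}$), generically vary in time, proving (ii). For (iii), I would use orthogonal invariance of the Euclidean norm to obtain $\bar{d}_{ij}=\|\bar{p}_{ij}^{i}\|=\|\bar{R}_{0i}^{T}(\bar{p}_{0j}^{0}-\bar{p}_{0i}^{0})\|=\|\bar{p}_{0j}^{0}-\bar{p}_{0i}^{0}\|$, so that $\bar{d}_{ij}$ is constant and the shape is rigid despite the rotating relative positions.

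For part (iv), I would read the adjoint velocity $(\ref{adspeed})$ componentwise, namely $\omega_{i}=\omega_{0}+\bar{\omega}_{0i}$ and $v_{i}=(v_{0}-\omega_{0}\bar{y}_{0i}^{0})\cos\bar{\theta}_{0i}+\omega_{0}\bar{x}_{0i}^{0}\sin\bar{\theta}_{0i}$. Because both $\bar{\theta}_{0i}$ and $\bar{\omega}_{0i}=\dot{\bar{\theta}}_{0i}$ depend on the index $i$ through the differing constants $\bar{x}_{0i}^{0},\bar{y}_{0i}^{0}$, the pairs $(v_{i},\omega_{i})$ differ across vehicles in general. The sole exception is the translational straight line formation ($\omega_{0}=0$): there $(\ref{theta0i})$ forces $\bar{\theta}_{0i}\in\{0,\pi\}$ with $\bar{\omega}_{0i}=0$, so every vehicle inherits $\omega_{i}=0$ and $v_{i}=v_{0}$, making all speeds identical.

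I expect the main obstacle to be the negative, existential nature of claims (ii) and (iv): ``do not have to be constant'' and ``not identical'' are not universal identities to be derived, but statements that the quantities fail to be constant in general. Rigor therefore requires pointing to a concrete admissible scenario (a time-varying speed profile for vehicle $0$, permitted precisely because weak rigid-body motion, unlike strict, does not pin $v_{0}/\omega_{0}$ to a constant) in which $\bar{\theta}_{0i}$ genuinely evolves and differs between indices, thereby witnessing the non-constancy; the remaining algebra is routine and parallels Proposition~\ref{pop2}.
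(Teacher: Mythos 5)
Your proposal is correct and follows essentially the same route as the paper: parts (i)--(iii) read off the constant translational block of $\bar{g}_{0i}$, the inter-vehicle relation $\bar{g}_{ij}=\bar{g}_{0i}^{-1}\bar{g}_{0j}$ with time-varying $\bar{R}_{0i}$, and orthogonal invariance of the norm, exactly as in the paper's proof, while for (iv) you spell out the componentwise reading of the adjoint velocity that the paper merely delegates to the analysis of Proposition~\ref{pop2}. Your closing remark about the need for a concrete time-varying speed profile to witness the non-constancy in (ii) and (iv) is, if anything, slightly more careful than the published argument.
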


\begin{proof}
When each vehicle $i$ moves along its adjoint orbit $\tilde{g}_{i}$,  the relative configuration of  vehicle $i$ with respect to vehicle $0$ is given in $(\ref{g0i})$. Then, the relative positions of each vehicle $i$ with respect to vehicle $0$ are constant vectors (i.e. $\bar{p}_{0i}^{0}=[\bar{x}_{0i}^{0},\bar{y}_{0i}^{0}]^{T}$). Based on  Definition~\ref{def:weak_fixed_formation}, it proves the statement (i).

For the time-varying $(\omega_{0}\bar{x}_{0i}^{0})/(v_{0}-\omega_{0}\bar{y}_{0i}^{0})$,  the relative attitude of each vehicle $i$ with respect to vehicle $0$  in $(\ref{theta0i})$ is time-varying. Then, based on  $(\ref{ggij})$, the relative position and attitude of vehicle $j$ with respect to vehicle $i$ are given as  $\bar{R}_{ij}=\bar{R}_{0i}^{T}\bar{R}_{0j}$, $\bar{p}_{ij}^{i}=\bar{R}_{0i}^{T}(\bar{p}_{0j}^{0}-\bar{p}_{0i}^{0})$. Thus, for the time-varying $\bar{\theta}_{0i}$, $\bar{R}_{ij}$ and $\bar{p}_{ij}^{i}$ are time-varying. Therefore, the statement (ii) holds.

The distance between any two vehicles is given as $\bar{d}_{ij}=\|\bar{p}_{ij}^{i}\|=\|\bar{R}_{0i}^{T}(\bar{p}_{0j}^{0}-\bar{p}_{0i}^{0})\|=\|\bar{p}_{0j}^{0}-\bar{p}_{0i}^{0}\|$ which implies that (iii) holds.

Based on the analysis of Proposition $\ref{pop2}$, the statement (iv) is obvious.
\end{proof}

\begin{rmk}
Different from maintaining the mobile formation with strict rigid-body motion that requires either circular motion, parallel formation or translational straight line formation, to maintain a mobile formation with  weak rigid-body motion, the vehicle $0$ which guides the mobile formation can move with any bounded $v_{0},\omega_{0}$ and all other vehicles have more degrees of  motion freedoms and can perform other formation motions.
\end{rmk}

\begin{rmk}
From the analysis of Propositions~\ref{pop1}, \ref{popp}, \ref{pop2} and \ref{popweak}, one can conclude that the headings for each individual vehicle are not identical except for the cases of parallel formation and translational straight line formation, and the linear speeds  for each individual vehicle are not identical except for the case of translational straight line formation.
\end{rmk}


\begin{rmk}
In \cite{das2002,liangxinwu,Consolini2008,liangxinwu2} and references therein, based on the leader-follower approach, the formation tasks for non-holonomic vehicles are often determined by the body-fixed frame of the leaders or followers.  However, since the desired relative position of each vehicle is determined by each preceding vehicle,  the rigid formation shape cannot be guaranteed by the proposed controls in those papers, and therefore the mobile formation  with weak/strict rigid-body motion cannot be obtained. For example, suppose three vehicles are connected by a directed tree graph (i.e. $0\rightarrow 1 \rightarrow2$), and the formation task is defined as $\lim_{t\rightarrow \infty }g_{0}^{-1}g_{1}=\bar{g}_{01}$, and $\lim_{t\rightarrow \infty }g_{1}^{-1}g_{2}=\bar{g}_{12}$. Then, the relative configuration of vehicle 2 with respect to vehicle 0 is given as $\lim_{t\rightarrow \infty }g_{0}^{-1}g_{2}=\bar{g}_{01}\bar{g}_{12}$. Thereby, the distance between vehicle 0 and 2 is time-varying in the limit if $\bar{\theta}_{01}$ is time-varying. Thus, the rigid formation shape cannot be achieved. In this paper,  Propositions \ref{pop1} and \ref{popp} demonstrate the adjoint orbit and its corresponding properties, from which  the necessary and sufficient condition of mobile formation with strict rigid-body motion  is derived in Proposition \ref{pop2}.  Proposition \ref{popweak} provides a way to determine a mobile formation with weak rigid-body motion.  As far as we know, this is the first time that such conditions and properties are proposed.
\end{rmk}

\begin{rmk}
In this subsection, we not only analyze the motion properties of some mobile formation maneuvers, but also provide velocity inputs in $(\ref{adspeed})$   to maintain the mobile formation with weak/strict rigid-body motion. Besides, to maintain a mobile formation with weak/strict rigid-body motion (except for the translational straight line formation), the velocities of all vehicles are non-identical, which is different from the mobile formation with only translational motion in Fig.~$\ref{formation}$ (a) which are studied in many previous papers \cite{yuxiao,liutengfei1,liutengfei2,miaozhiqiang}.
\end{rmk}

\begin{rmk} \label{graphtree}
Different from formation shape control based on graph rigidity theory \cite{Anderson2008} that uses inter-agent distances to specify a desired formation shape,
the relative configuration $g_{j}^{-1}g_{i}$ is used here to describe a desired  formation shape  of non-holonomic vehicle $i$ with respect to non-holonomic vehicle $j$ in a mobile formation. Once a formation task $\bar{g}_{0i}$ is determined, the  shape in a mobile formation  is  rigid.
 Based on the group operation on $SE(2)$ for describing a formation task,  a mobile  formation with  weak/strict  rigid-body motion can be achieved under different underlying graph topologies, e.g.,    directed graphs, undirected graphs, leader-follower structure, or leaderless structure. In the following subsection, we present a formation control scheme based on a directed tree graph.
\end{rmk}

\subsection{An example of distributed mobile formation control} \label{treegraph}

In Propositions $\ref{pop2}$ and $\ref{popweak}$, the weak/strict rigid-body motion for a mobile formation is determined by the task $\bar{g}_{0i}$. However, each vehicle requires real-time information of vehicle speeds of the vehicle $0$. For  distributed control of a networked multi-vehicle formation   system, it is desirable  that the vehicle 0's information is only available to a few vehicles, but not to all follower  vehicles. In this subsection, based on the leader-follower structure and the results in Section~\ref{sec:twoStage}, by assuming a  directed tree graph, a fully distributed control for achieving a mobile formation with strict rigid-body motion  is proposed.
In a directed tree graph, each follower has only  one parent node. First, we present the following proposition.
\begin{proposition} \label{pop3}
Consider $n+1$ non-holonomic vehicles labeled with $i=0,1,\cdots,n$, and interacted by a directed tree graph. Let  vehicle $j$ be the only parent node of vehicle $i$, and  the desired  formation shape is given by the fixed relative position vectors $\bar{p}_{ji}^{j}=[\bar{x}_{ji}^{j},\bar{y}_{ji}^{j}]^{T}$.  Then,  the following properties hold.

(i). Except for the cases of parallel formation or translational straight line formation, by designing  appropriate controllers, a mobile  formation with strict rigid-body motion in the fully distributed networked formation control system can be achieved if and only if the speed  ratio $v_{0}/\omega_{0}$ is constant.

(ii). The desired relative position of vehicle $i$ with respect to any vehicle  is fixed.

\end{proposition}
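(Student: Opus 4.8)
The plan is to reduce the tree-specified task back to the frame of vehicle~$0$ and then invoke Proposition~\ref{pop2}, leaving only the distributed realization as genuinely new. Because the interaction graph is a directed tree rooted at vehicle~$0$, every vehicle~$i$ is reached from $0$ along a unique directed path $0\to k_{1}\to\cdots\to k_{m-1}\to i$, and the relative configuration of $i$ with respect to $0$ factorizes as $\bar{g}_{0i}=\bar{g}_{0k_{1}}\bar{g}_{k_{1}k_{2}}\cdots\bar{g}_{k_{m-1}i}$. Each edge factor $\bar{g}_{jk}$ is assembled from the prescribed relative position $\bar{p}^{j}_{jk}$ together with the relative heading $\bar{\theta}_{jk}$ fixed by the non-holonomic constraint exactly as in $(\ref{theta0i})$, but applied to the pair $(j,k)$ with $j$'s speeds $v_{j},\omega_{j}$ in place of $v_{0},\omega_{0}$. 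This recasts the task into the adjoint-orbit language of Proposition~\ref{pop1}, so that each follower~$i$ can track the orbit $g_{j}\bar{g}_{ji}$ of its unique parent~$j$ using the controller of Theorem~\ref{thm1}; since each node uses only its parent's state, the scheme is fully distributed.

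For the sufficiency direction of (i) I would argue by induction on the depth in the tree. Assuming $v_{0}/\omega_{0}$ is constant, Proposition~\ref{pop2}(i) gives that $\bar{\theta}_{0j}$ is constant and, from the adjoint velocity $(\ref{adspeed})$, that $\omega_{j}=\omega_{0}$ with $v_{j}/\omega_{j}$ constant for every vehicle. Feeding these into the edge relation $\bar{\theta}_{ji}=\mathrm{atan2}(\omega_{j}\bar{x}^{j}_{ji},\,v_{j}-\omega_{j}\bar{y}^{j}_{ji})$ shows each $\bar{\theta}_{ji}$, hence each factor $\bar{g}_{ji}$, is constant, and composing along the path makes $\bar{g}_{0i}$ constant, which is strict rigid-body motion. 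The induction is precisely what propagates the constant-ratio property from the root down to every leaf. Conversely, if the formation has strict rigid-body motion then every $\bar{g}_{0i}$, and in particular every $\bar{\theta}_{0i}$, is constant, and $(\ref{theta0i})$ forces $v_{0}/\omega_{0}$ to be constant outside the parallel and translational straight-line cases. Statement (ii) is then immediate from the computation in $(\ref{ggij})$: once every $\bar{g}_{0k}$ is constant, $\bar{g}_{ki}=\bar{g}_{0k}^{-1}\bar{g}_{0i}$ and $\bar{p}^{k}_{ki}=\bar{R}_{0k}^{T}(\bar{p}^{0}_{0i}-\bar{p}^{0}_{0k})$ are constant, so the relative position of vehicle~$i$ with respect to any vehicle~$k$ is fixed.

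I expect the main obstacle to be the distributed realization rather than the algebra. Since each follower senses only its parent, the parent's tracking error enters the child's error dynamics as an exogenous perturbation, so I would set up a cascade argument down the tree: show that the perturbation term analogous to $\Delta_{1}$ in $(\ref{modelde})$ is bounded by the parent's vanishing error in the spirit of Lemmas~\ref{condition} and \ref{condition2}, so that asymptotic tracking propagates from the root to the leaves without amplification. A secondary point to verify is that the forward-speed positivity guaranteed by Proposition~\ref{popp} is inherited at every node, so that each vehicle maintains $v_{i}>0$ while the whole formation converges to the rigid orbit.
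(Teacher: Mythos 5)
Your argument is correct and follows essentially the same route as the paper's proof: decompose the desired relative configuration between any two vehicles as a product of edge configurations along tree paths, note that each edge heading $\bar{\theta}_{ji}=\mathrm{atan2}(\omega_{j}\bar{x}_{ji}^{j},\,v_{j}-\omega_{j}\bar{y}_{ji}^{j})$ is constant precisely when the speed ratios are, and conclude that the product (hence $\bar{p}_{ik}^{i}$) is constant if and only if $v_{0}/\omega_{0}$ is constant outside the parallel and straight-line cases; your explicit induction down the tree and the necessity direction just flesh out what the paper states tersely. The cascade-stability and forward-speed-propagation concerns in your final paragraph are legitimate but belong to the controller result (Theorem~\ref{thm2}), not to this proposition, which only characterizes feasibility of the formation task.
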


\begin{proof}
Since vehicle $i$ only obtains the information of its parent node (i.e. vehicle $j$), the networked formation control system is fully distributed. For any two vehicles $i$ and $k$, the desired relative configuration of vehicle $i$ with respect to vehicle $k$ can be calculated as $\bar{g}_{ik}=\bar{g}_{ji}\cdots \bar{g}_{hk} $, where $j$ is the parent node of $i$, $h$ is the parent node of $k$, while nodes $k$ and $i$ do not interact directly. Thus, the relative position $\bar{p}_{ik}^{i}$ for any two vehicles $i,k$ is constant if and only if the matrices $\bar{g}_{ji},\cdots, \bar{g}_{hk} $ are constant, i.e.,   $\bar{\theta}_{ji}, \cdots, \bar{\theta}_{hk}$ are constant.  From $\bar{\theta}_{ji}=\mathrm{atan2}( \omega_{j}\bar{x}_{ji}^{j}, v_{j}-\omega_{j}\bar{y}_{ji}^{j})$, it implies that  the  speed ratio $v_{0}/\omega_{0}$ should be constant. This proves (i). Since $  \bar{p}_{ik}^{i}$ is a constant vector,  the statement (ii) is proved.

\end{proof}


By assuming that the   speed  ratio $v_{0}/\omega_{0}$ is constant (i.e the leader moves along a circular motion),  we design  formation controllers for achieving a mobile formation with strict rigid-body motion as follows.  Let $\tilde{g}_{i}=g_{j}\bar{g}_{ji}$ be  vehicle $i$'s adjoint orbit. Then the kinematic equation of adjoint orbit for vehicle $i$ is given by
\begin{equation} \label{gij}
\dot{\tilde{g}}_{i}=\tilde{g}_{i}(\mathrm{Ad}_{\bar{g}_{ji}^{-1}}\hat{\xi}_{j})
\end{equation}
where $\mathrm{Ad}_{\bar{g}_{ji}^{-1}}\hat{\xi}_{j}$ is the vehicle $i$'s adjoint velocity which satisfies the non-holonomic constraint from the result of  Proposition $\ref{pop1}$.

Since the adjoint orbit is the desired trajectory when the mobile formation with strict rigid-body motion is achieved, the adjoint orbits can be viewed as virtual leaders which should be tracked by individual follower vehicles in the networked formation control system.
To this end, equation $(\ref{gij})$ can be rewritten as follows
\begin{subequations}  \label{modelpij}
\begin{align} \label{modelppij}
\dot{\tilde{p}}_{i}  =& \tilde{v}_{i}\tilde{R}_{i}e_{1} \\ \label{modelRij}
\dot{\tilde{R}}_{i}=& \tilde{R}_{i}\hat{\tilde{\omega}}_{i}
\end{align}
\end{subequations}
where $\tilde{v}_{i}=(v_{j}-\omega_{j}\bar{y}_{ji}^{j})\cos\bar{\theta}_{ji}+ \omega_{j}\bar{x}_{ji}^{j}\sin\bar{\theta}_{ji}$ and $\tilde{\omega}_{i}=\omega_{j}$.  Based on the trajectory tracking result of Section~\ref{sec:prob}, the  formation control laws are  designed as follows
\begin{equation}  \label{vi}
v_{i} =\|u_{i}\|
\end{equation}
\begin{equation}  \label{wi}
\omega_{i} =-k_{2}\sigma((R_{ij}-R_{ij}^{T})^{\vee})+{\varpi}_{j}
\end{equation}
where the virtual control input vector is  given by
\begin{equation}
u_{i}=-k_{1}\sigma(\tilde{p}_{ij})+v_{j}{R}_{j}e_{1}
 \end{equation}
in which $\tilde{p}_{ij}=\tilde{p}_{i}-p_{i}$, and $R_{ji}=\mathcal{R}_{j}^{T}R_{i}$ with the intermediate rotation matrix $\mathcal{R}_{j}$ constructed as
\begin{equation}
\mathcal{R}_{j}=[r_{j}^{1},r_{j}^{2}]\in SO(2)
\end{equation}
with the vectors defined by
\begin{equation*}
r_{j}^{1}=\frac{u_{i}}{\|u_{i}\|} \in \mathbb{S}^{1}, \quad r_{j}^{2}=\left[
               \begin{matrix}
               -r_{j}^{1}(2,1)\\
               r_{j}^{1}(1,1)
               \end{matrix}  \right]  \in \mathbb{S}^{1}
\end{equation*}

The main result in this subsection is in Theorem 2.

\begin{thm} \label{thm2}
Consider  $n+1$ non-holonomic vehicles interacted  by a directed tree graph. Assume  $\|u_{i}\|\neq0$, $\mathrm{tr}(R_{ji}(0))\neq -2$, and   Assumption $\ref{desiredw}$ is satisfied.  If the speed  ratio $v_{0}/\omega_{0}$ is constant, then the mobile  formation with strict rigid-body motion in Proposition $\ref{pop3}$ can be achieved under the saturated inputs $(\ref{vi})$ and $(\ref{wi})$.
\end{thm}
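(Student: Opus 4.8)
The plan is to reduce this $n$-follower problem to $n$ independent applications of the two-vehicle result, Theorem~\ref{thm1}, one per edge of the tree, and then to recombine the edgewise conclusions by composition along tree paths. First I would record that, since $v_0/\omega_0$ is constant, the reasoning of Proposition~\ref{pop3} propagates down the tree so that every relative heading $\bar\theta_{ji}=\mathrm{atan2}(\omega_j\bar{x}_{ji}^j,\,v_j-\omega_j\bar{y}_{ji}^j)$ is constant; hence each desired relative configuration $\bar{g}_{ji}$ is a fixed element of $SE(2)$ and each adjoint orbit $\tilde{g}_i=g_j\bar{g}_{ji}$ is well defined. By Proposition~\ref{pop1}, $\tilde{g}_i$ obeys the non-holonomic kinematics (\ref{gij}), i.e.\ the unicycle model (\ref{modelpij}) with $\tilde\omega_i=\omega_j$ and linear speed $\tilde v_i$. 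The key observation is that (\ref{vi})--(\ref{wi}) is precisely the controller of Section~\ref{sec:twoStage} with the leader $g_0$ replaced by the virtual leader $\tilde{g}_i$ and the intermediate attitude $\mathcal{R}_0$ replaced by $\mathcal{R}_j$, so that vehicle $i$'s closed loop is an instance of the perturbed error system (\ref{modelde}).

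Next I would verify the hypotheses of Theorem~\ref{thm1} for each edge. The conditions $\|u_i\|\neq 0$ and $\mathrm{tr}(R_{ji}(0))\neq -2$ are assumed; the remaining requirement is that $\tilde{g}_i$ play the role of a leader satisfying Assumption~\ref{desiredw}, i.e.\ that $\tilde v_i,\tilde\omega_i$ are bounded and $\tilde v_i>0$. I would obtain this by induction on tree depth. At the root, Assumption~\ref{desiredw} gives $v_0>0$ with $v_0,\omega_0$ bounded. For the inductive step, (\ref{vi}) gives $v_i=\|u_i\|>0$ and, by the saturation argument of Lemma~\ref{condition2}, bounded, while $\omega_i$ is bounded because $\sigma(\cdot)$ is bounded and $\varpi_j$ is bounded away from $\|u_i\|=0$; then Proposition~\ref{popp} forces the virtual leader of each child to inherit $\tilde v>0$ with bounded speeds. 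Thus the hypotheses of Theorem~\ref{thm1} hold on every edge.

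Applying Theorem~\ref{thm1} to edge $(j,i)$ then yields $p_i\to\tilde p_i$ and $R_i\to\tilde R_i$, equivalently $g_j^{-1}g_i\to\bar{g}_{ji}$. Crucially, each virtual leader is built from the \emph{actual} parent configuration, so this edgewise convergence holds irrespective of whether the parent has yet reached its own orbit; no further cascade-stability estimate beyond the boundedness induction above is required. Finally, since the graph is a directed tree, any two vehicles $i,k$ are joined by a unique path, and composing the limiting edge configurations as in Proposition~\ref{pop3}, $\bar{g}_{ik}=\bar{g}_{ji}\cdots\bar{g}_{hk}$, gives $g_i^{-1}g_k\to\bar{g}_{ik}$, a constant element of $SE(2)$. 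Hence the relative position of every pair of vehicles converges to a constant, which is exactly the strict rigid-body motion of Definition~\ref{def:fixed_formation}.

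The step I expect to be the main obstacle is making the inductive verification uniform in time: $\tilde{g}_i$ is time-varying and driven by the parent's closed-loop trajectory, so one must confirm that the perturbation bound $\|\Delta_i\|<\varphi\|\theta_{ji}\|$ of Lemma~\ref{condition} and the boundedness of Lemma~\ref{condition2} persist when the feedforward $v_jR_je_1$ and the intermediate-attitude rate $\varpi_j$ are themselves generated by a converging subsystem rather than by a fixed leader. Care is likewise needed near $\|u_i\|=0$, where $\mathcal{R}_j$ loses smoothness and $\varpi_j$ may become unbounded; as in the remark after Theorem~\ref{thm1}, I would exclude this by choosing $k_1$ so that $v_i$ stays positive, thereby keeping $\varpi_j$ bounded throughout the tree and preserving the edgewise hypotheses.
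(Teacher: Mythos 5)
Your proposal is correct and follows essentially the same route as the paper's own proof: apply Theorem~\ref{thm1} edgewise with the adjoint orbit $\tilde{g}_{i}=g_{j}\bar{g}_{ji}$ as a virtual leader to obtain $g_{j}^{-1}g_{i}\rightarrow \bar{g}_{ji}$, then compose the constant relative configurations along tree paths via Proposition~\ref{pop3}. The paper's proof is in fact far terser --- it simply invokes Theorem~\ref{thm1} and Proposition~\ref{pop3} without the inductive verification of Assumption~\ref{desiredw} for the virtual leaders or the cascade caveats you raise --- so your additional care only fills in steps the paper leaves implicit.
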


\begin{proof}
According to the trajectory tracking control approach presented in Theorem $\ref{thm1}$, one can conclude that the saturated inputs $(\ref{vi})$ and $(\ref{wi})$ can drive the vehicle $i$ to converge to the trajectory $(\ref{modelpij})$ (i.e., its adjoint orbit). Then, we can obtain that $g_{i}^{-1}\tilde{g}_{i}=g_{i}^{-1}g_{j}\bar{g}_{ji} \rightarrow I_{3}$ as $t\rightarrow \infty$, i.e. $\lim_{t\rightarrow \infty}g_{j}^{-1}g_{i}=\bar{g}_{ji}$. Based on the  analysis in Proposition $\ref{pop3}$, the networked   formation control system achieves  a mobile formation with desired strict rigid-body motion.
\end{proof}

\begin{rmk}
If the formation tasks are determined  as the parallel formation or the translational straight line formation, the saturated inputs $(\ref{vi})$ and $(\ref{wi})$ are also applicable. In addition, all vehicles can move along an arbitrary desired trajectory for the cases of parallel formation.
\end{rmk}

\section{Numerical  simulation and experiment}\label{sec:sml}

\subsection{Numerical  simulation}\label{sec:sml1}

In the simulation, we consider a group of autonomous ground vehicles consisting of two followers and one leader, and the underlying interaction graph for three vehicles is described  by a directed tree graph (i.e. $0\rightarrow 1 \rightarrow2$). The leader (vehicle 0)'s inputs are given by $v_{0}=0.06\mathrm{m}/\mathrm{s}$, and $\omega_{0}=0.05\mathrm{rad}/\mathrm{s}$, which are constants. The initial configurations of the three vehicles are $g_{0}=\{0,0,0\}$, $g_{1}=\{-\pi/4, 0,-0.2\}$ and $g_{2}=\{\pi/4, 0, 0.2\}$.
The control gains are chosen as $k_{1}=k_{2}=0.3$. The desired formation shape is given by relative position vectors  $\bar{p}_{01}^{0}=[-0.1,-0.1]^{T}$ and $\bar{p}_{12}^{1}=[0,0.2]^{T}$.  The evolutions of all vehicles' states are depicted in Fig.~\ref{constant}, which shows that the linear speeds of each  individual non-holonomic vehicles are different when they are tasked to  maintain a mobile  formation with strict rigid-body motion. Fig.~\ref{constant} also demonstrates the evolution of vehicles' headings and adjoint velocity as discussed in Propositions~\ref{pop1}, \ref{popp}, and \ref{pop2}. The positiveness of  linear speeds  verifies a guaranteed forwarding motion in the forward motion control  proposed in Section~\ref{sec:twoStage}. The formation trajectories
of three non-holonomic vehicles in the 2D space are plotted in Fig.~\ref{2d}. All figures show that the desired mobile formation is achieved and maintained by the proposed controller.

We refer the readers to the attached video for more simulations and movies on controlling and maintaining   mobile formations with weak/strict rigid-body motion.

	

\setlength{\abovecaptionskip}{0.cm}
\setlength{\belowcaptionskip}{-0.cm}
	\begin{figure}[!htb]
	\centering
	{		
		\subfigure[Relative positions]
		{\label{ed}\includegraphics[width=0.48\linewidth]{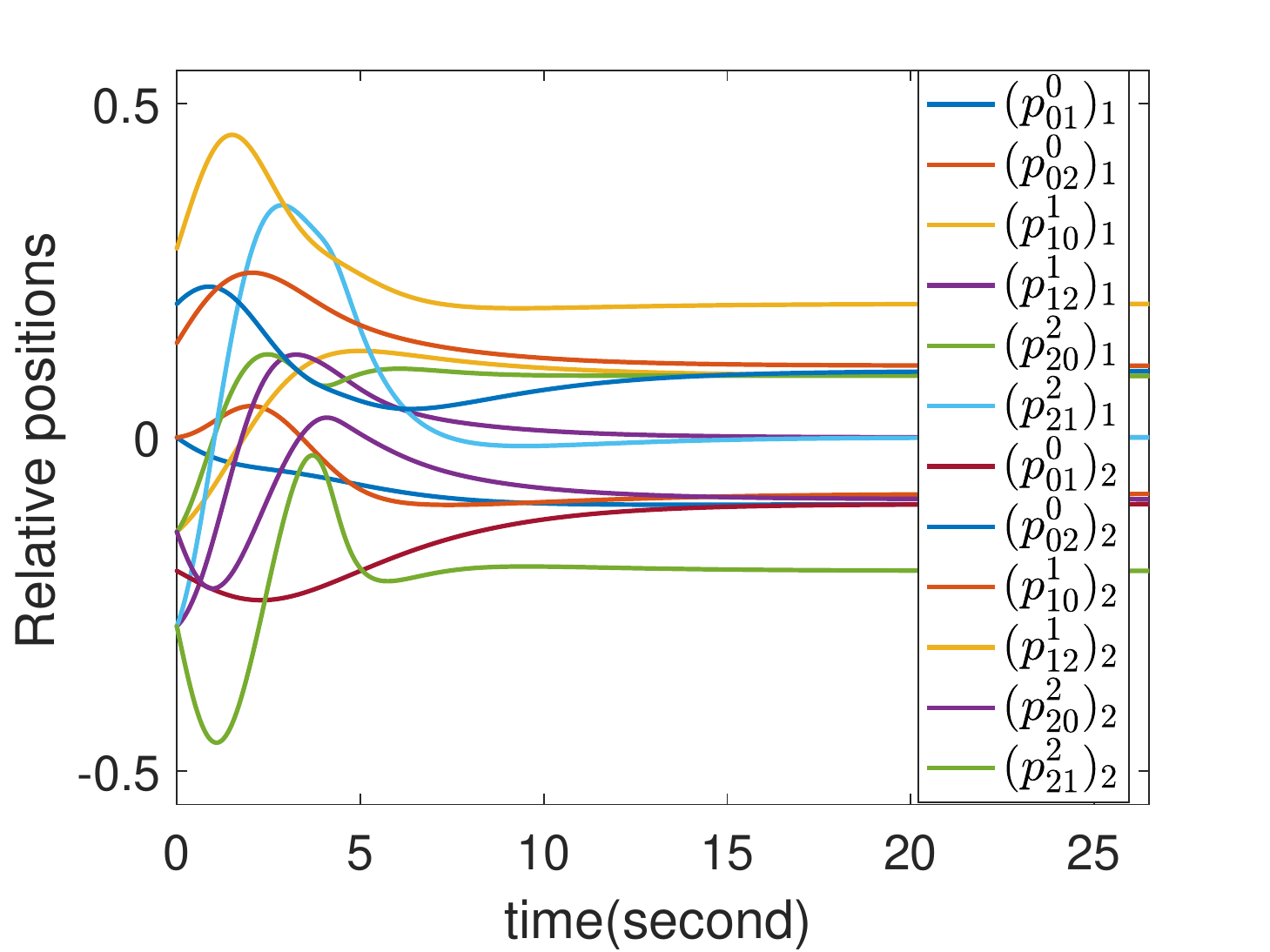}}
		\subfigure[Vehicles' headings]
		{\label{etheta}\includegraphics[width=0.48\linewidth]{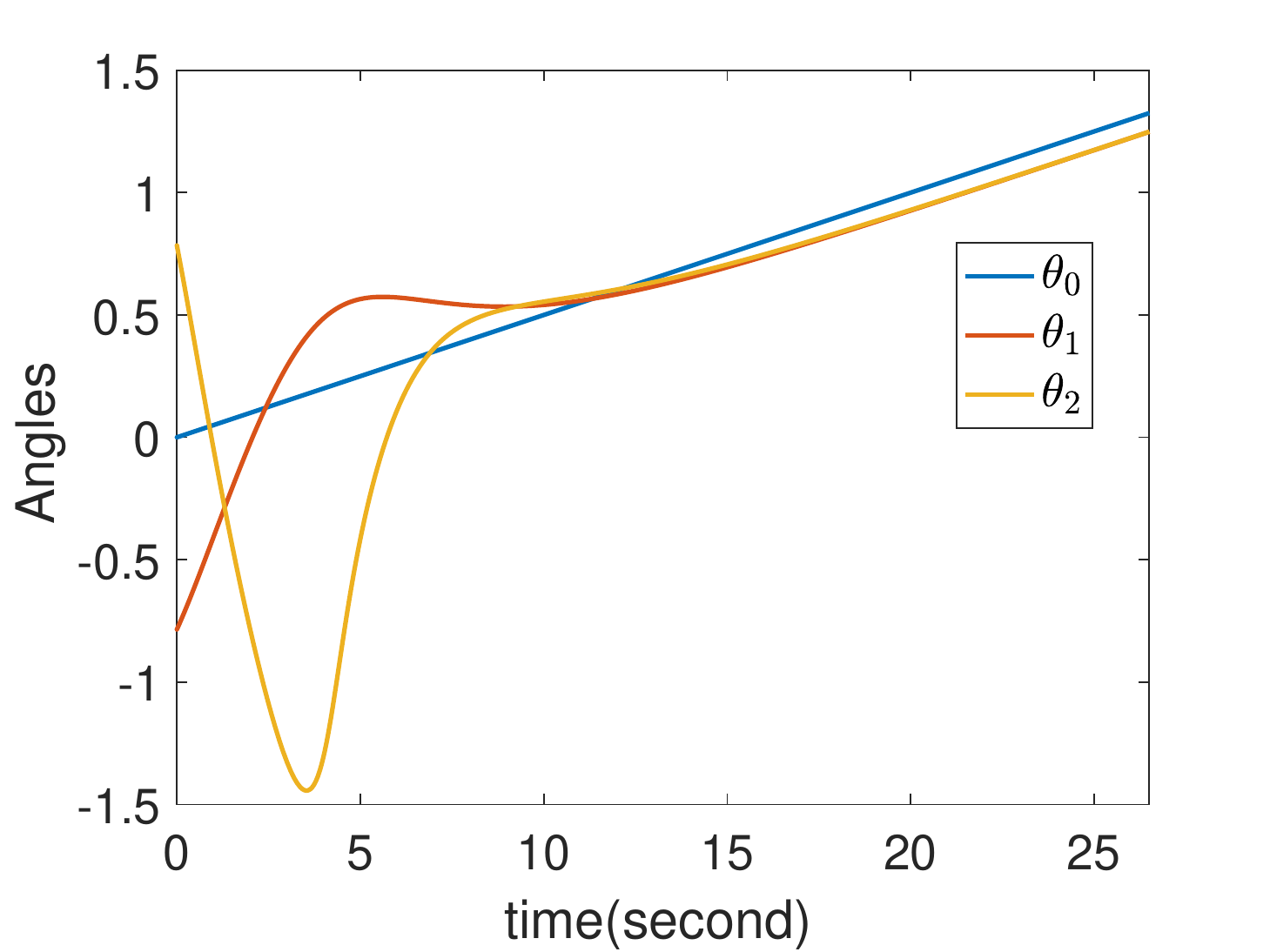}}
	
		\subfigure[Linear speeds]
		{\label{ev}\includegraphics[width=0.48\linewidth]{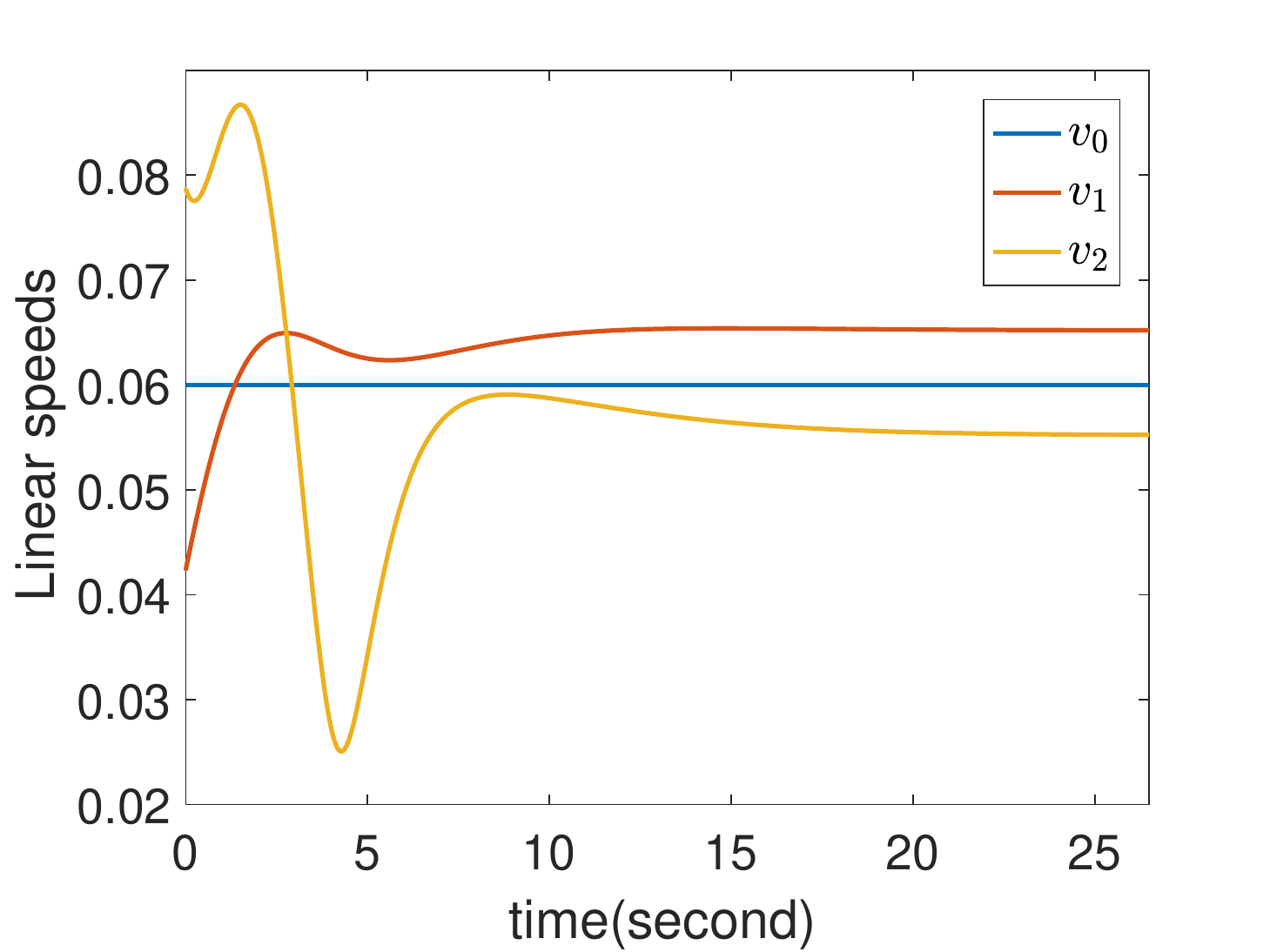}}	
		\subfigure[Angular speeds]
		{\label{ew}\includegraphics[width=0.48\linewidth]{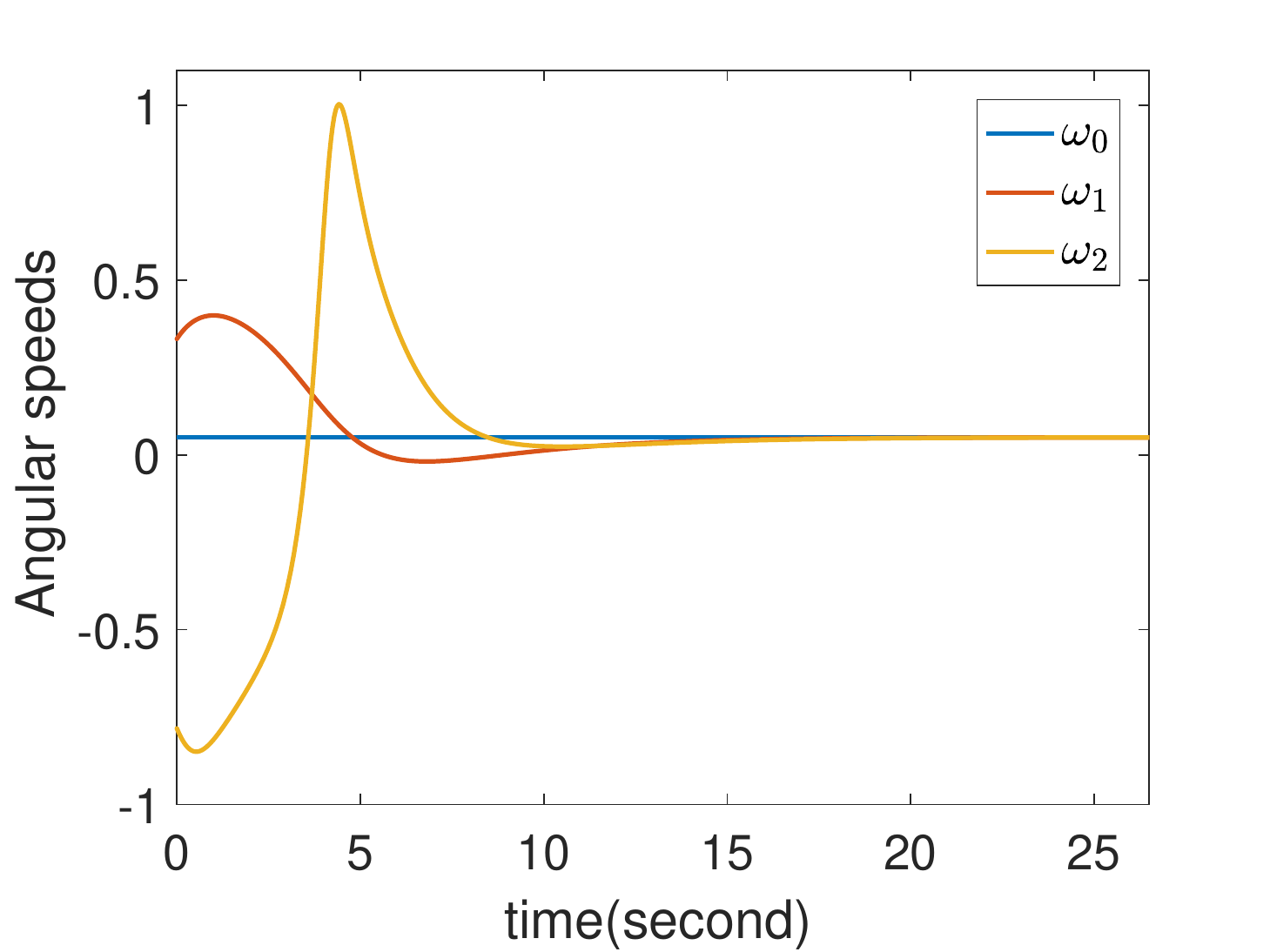}}				
	}
	\caption{The  evolutions of relative positions between any two vehicles, and each vehicle's headings, linear speeds and angular speeds  in the numerical simulation}
	\label{constant}
	\hspace{5cm}
\end{figure}

\setlength{\abovecaptionskip}{0.cm}
\setlength{\belowcaptionskip}{-0.cm}
\begin{figure}[!htb]
\centering
\includegraphics[width=3.0in]{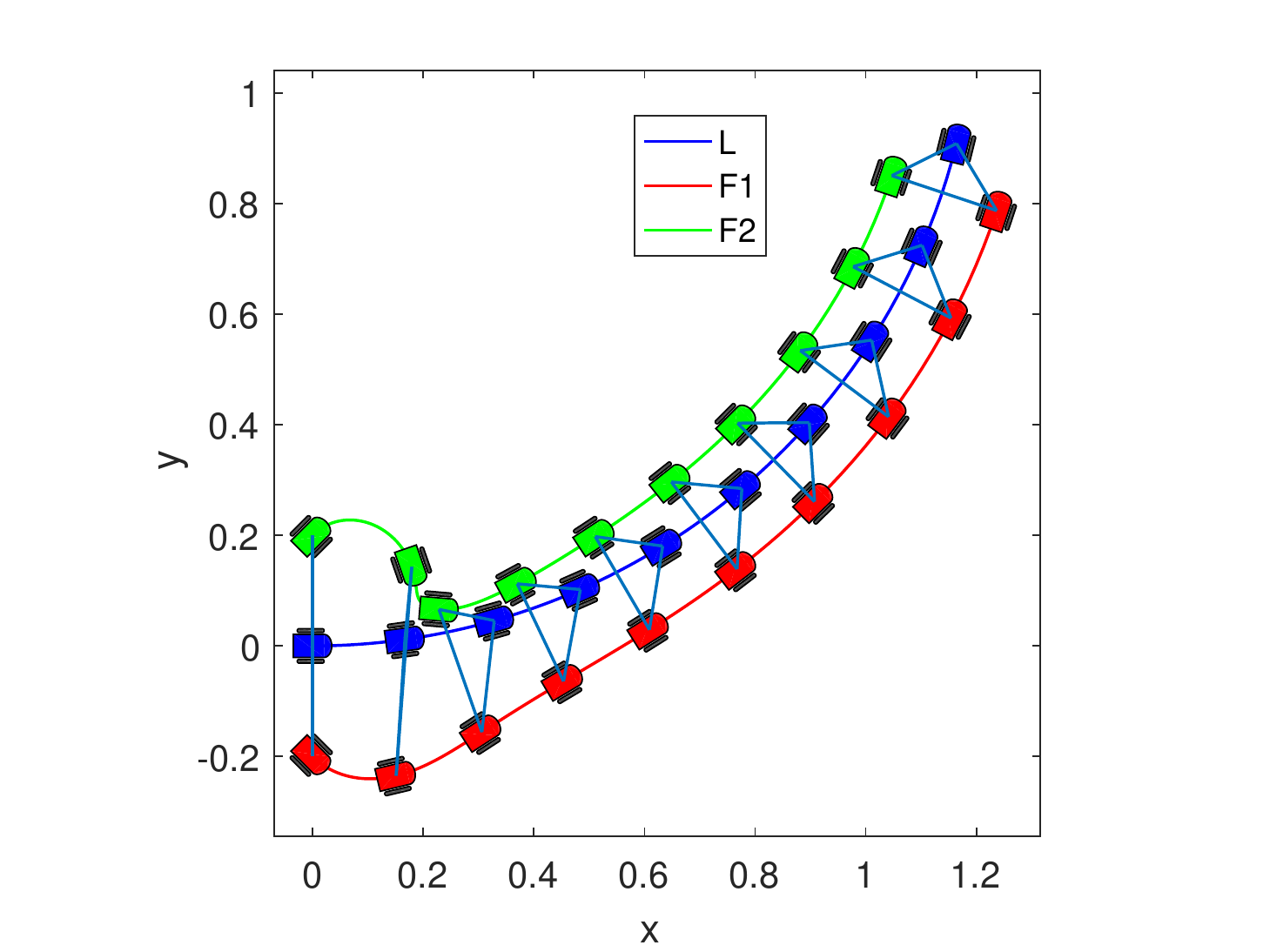}
\caption{The mobile formation with strict rigid-body motion   in the numerical simulation}
\label{2d}
\end{figure}

\subsection{Experiment}\label{sec:sml2}

To further demonstrate the applicability of the proposed scheme, real experiments on a physical multi-robotic system are performed. The experimental platform uses three non-holonomic robots named the wheeled E-puck robots {\cite{epuck,sunzhongqi}}.   In the experiment, we model the interaction among the three E-puck robots with a directed tree graph (i.e. $0\rightarrow 1 \rightarrow2$), as the same in  numerical simulations.  The robots move on a smooth and flat floor as shown in Fig.~\ref{e2d}.

The initial conditions, formation shape and gains are chosen the same as that in numerical simulation. The evolutions of relative positions, vehicles' headings, linear speeds and angular speeds are plotted in Fig.~\ref{estates} (plotted by Matlab with sampled data from the experiments). Fig.~\ref{e2d} shows the real-time trajectories of three robots in the mobile formation group, which are captured by a video camera. The experiments further validate both the forward motion control  proposed in Section~\ref{sec:twoStage} and mobile formation coordination under  strict rigid-body motion proposed in Section~\ref{sec:formation}.



	

\setlength{\abovecaptionskip}{0.cm}
\setlength{\belowcaptionskip}{-0.cm}
	\begin{figure}[!htb]
	\centering
	{		
		\subfigure[Relative positions]
		{\label{ed}\includegraphics[width=0.48\linewidth]{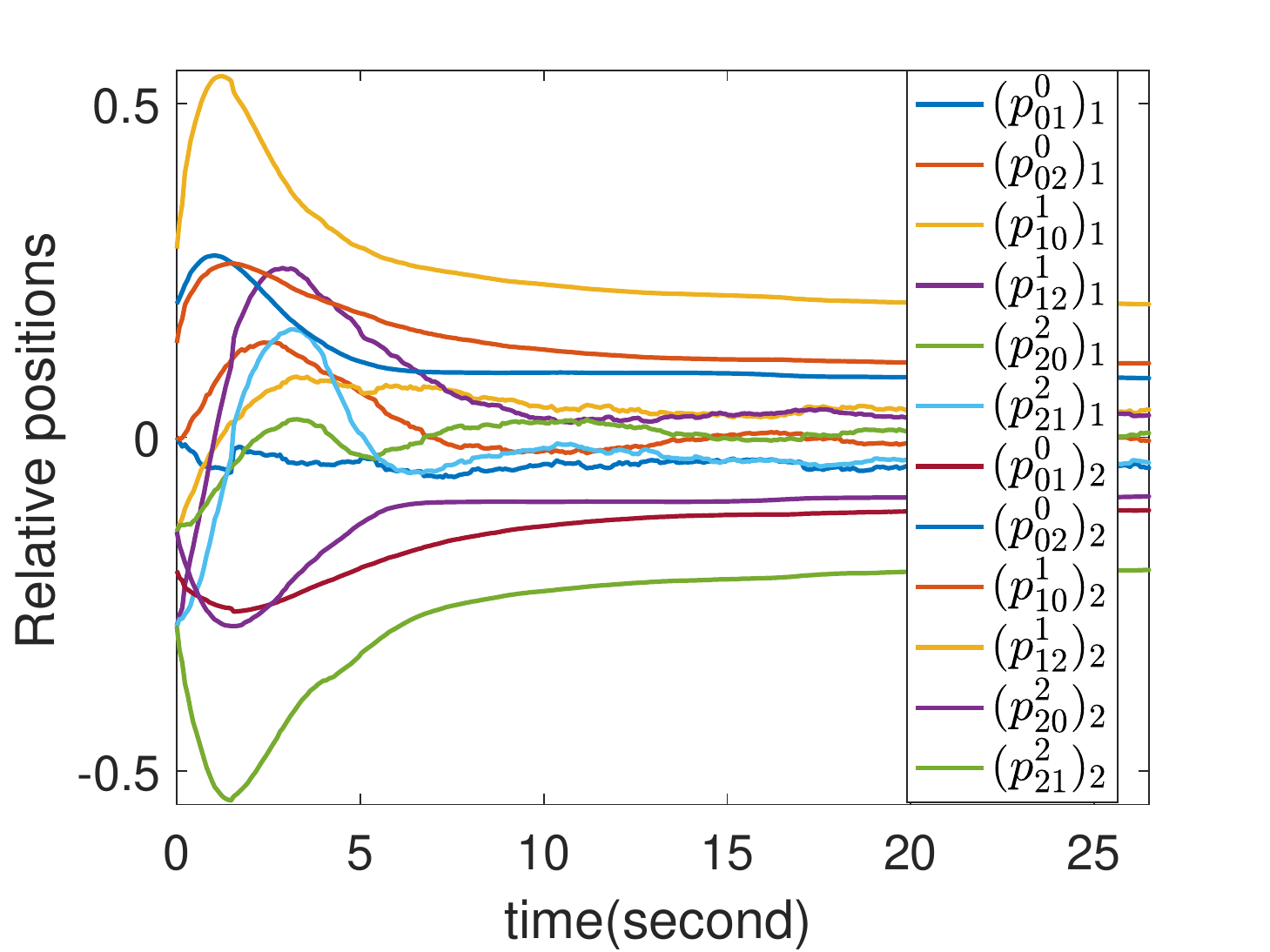}}
		\subfigure[Vehicles' headings]
		{\label{etheta}\includegraphics[width=0.48\linewidth]{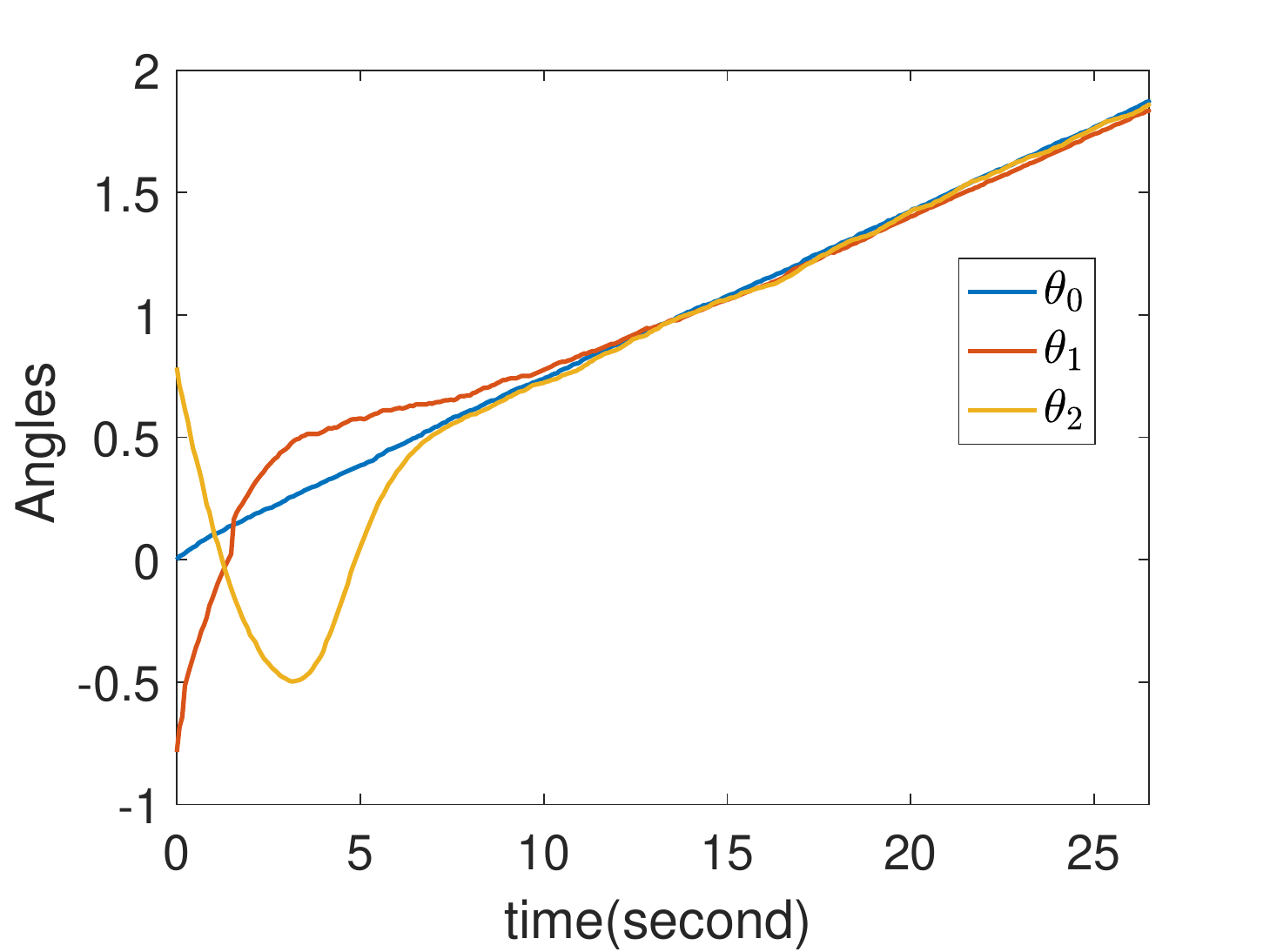}}
	
		\subfigure[Linear speeds]
		{\label{ev}\includegraphics[width=0.48\linewidth]{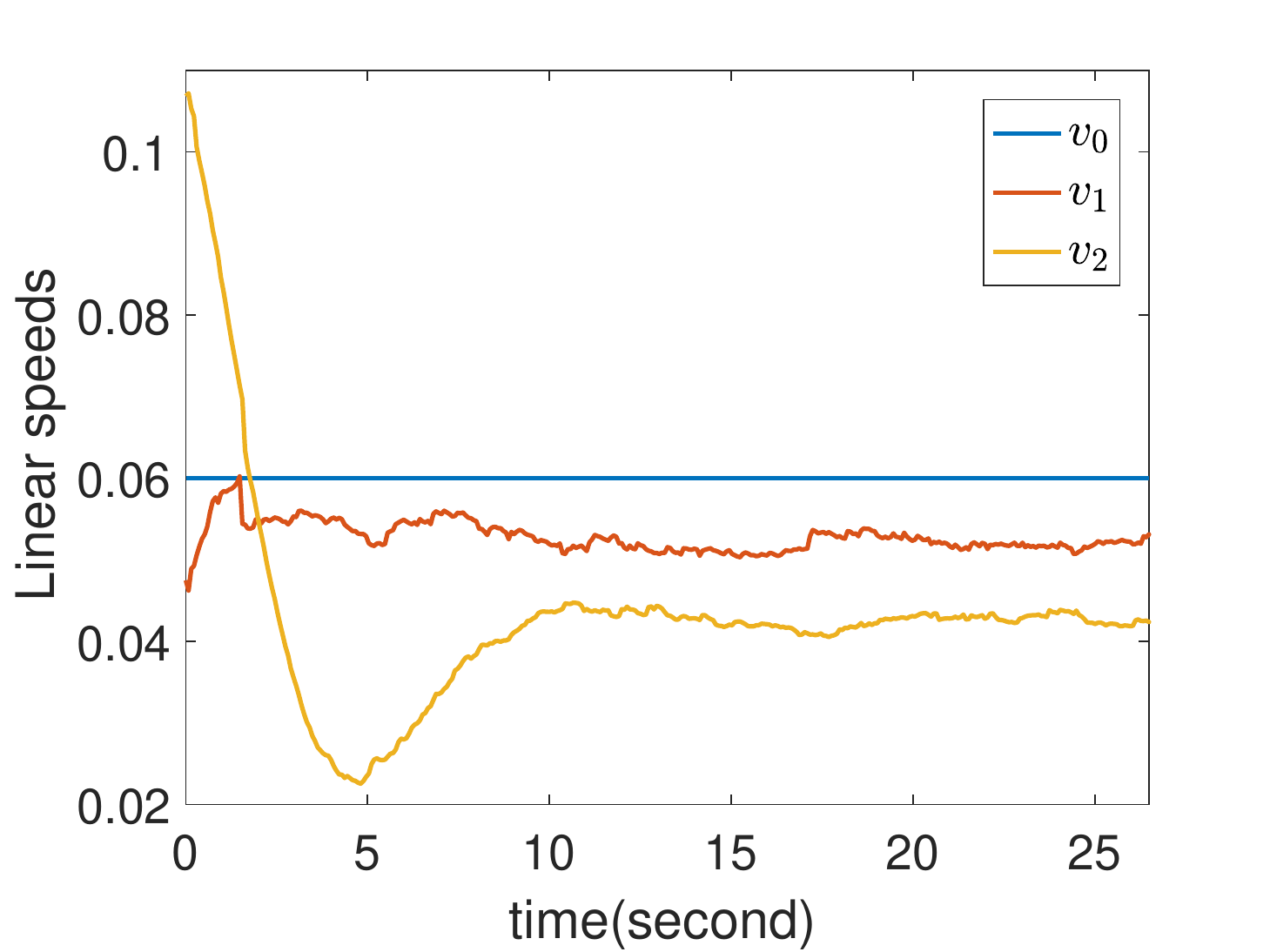}}	
		\subfigure[Angular speeds]
		{\label{ew}\includegraphics[width=0.48\linewidth]{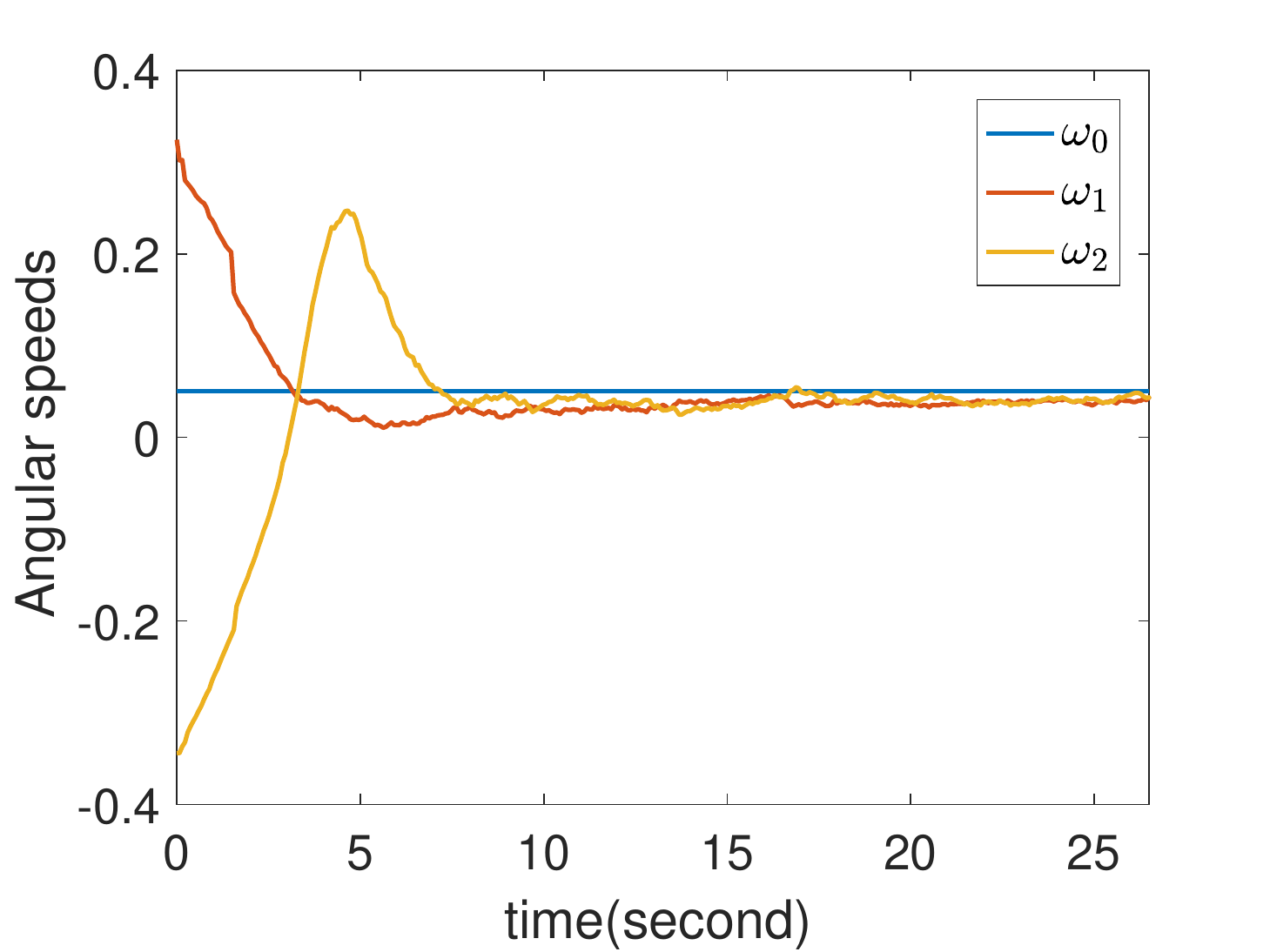}}				
	}
	\caption{The  evolutions of relative positions between any two vehicles, and each vehicle's headings, linear speeds and angular speeds  in real experiment}
	\label{estates}
	\hspace{5cm}
\end{figure}

%

\setlength{\abovecaptionskip}{0.cm}
\setlength{\belowcaptionskip}{-0.cm}
	\begin{figure}[!htb]
	\centering
	{		
		\subfigure[t=0s]
		{\label{t0}\includegraphics[width=0.48\linewidth]{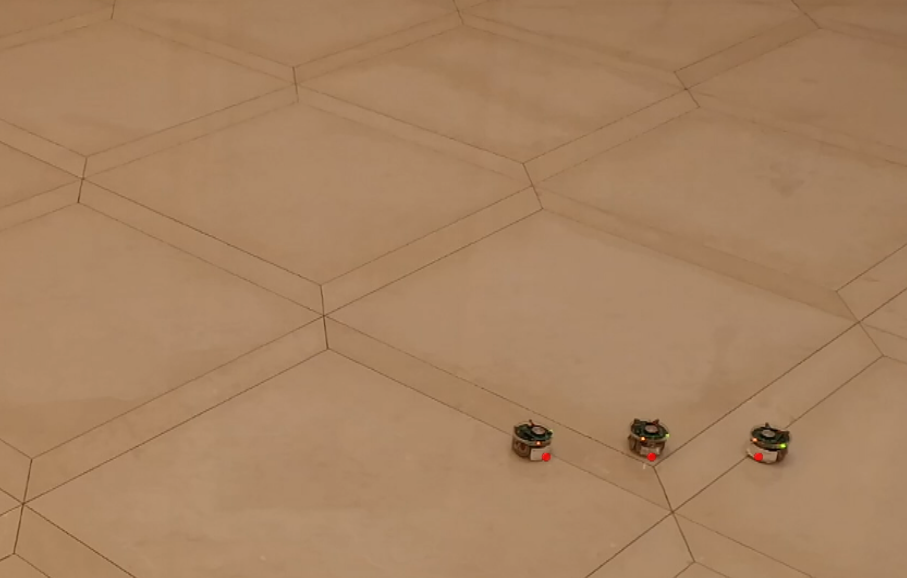}}
		\subfigure[t=5s]
		{\label{t5}\includegraphics[width=0.49\linewidth]{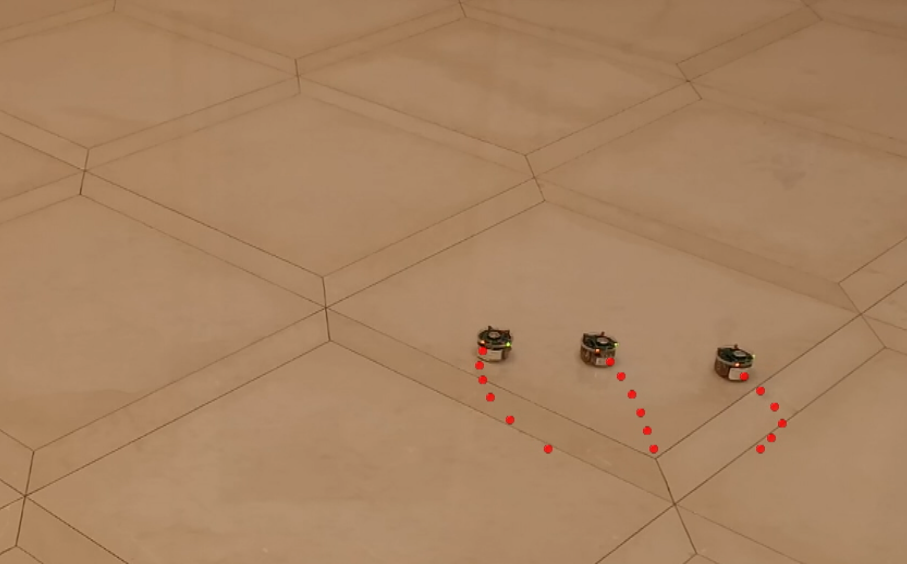}}
	
		\subfigure[t=13s]
		{\label{t13}\includegraphics[width=0.49\linewidth]{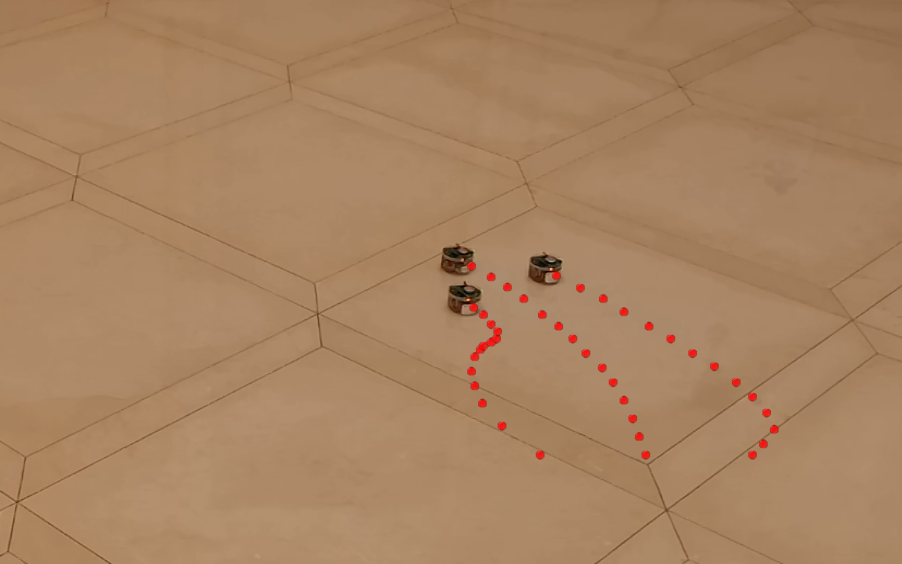}}	
		\subfigure[t=26s]
		{\label{t26}\includegraphics[width=0.49\linewidth]{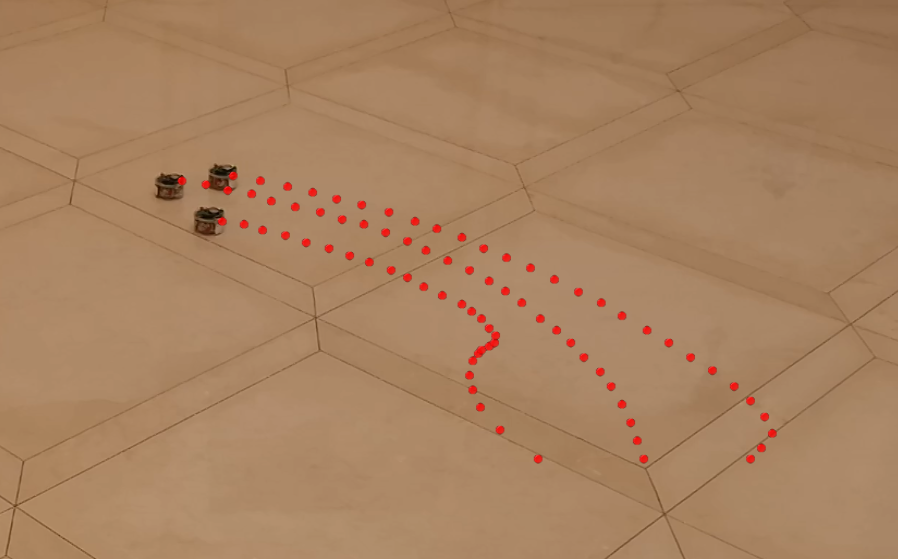}}				
	}
	\caption{Trajectories of three robots in a mobile formation experiment, with $t=0s,t=5s,t=13,t=26s$}
	\label{e2d}
	\hspace{5cm}
\end{figure}


\section{Conclusion}\label{sec:cls}

In this paper, the forward motion control for leader tracking and mobile formation with weak/strict rigid-body motion  for multiple non-holonomic vehicles are studied. An intermediate attitude which includes the relative position and attitude error to a leader vehicle is proposed, and the translational controller and rotational controller are designed in a two-stage framework. The formation behavior of   mobile  formations with  different rigid-body motion constraints for multiple non-holonomic vehicles is explored, and we demonstrate that the headings and linear speeds for individual vehicles are not identical when a group of non-holonomic  vehicles maintain a mobile  formation under rigid-body motion constraints. The behavior of the mobile  formation is analyzed  and a formation control strategy is proposed to achieve a mobile formation for multiple non-holonomic vehicles with strict rigid-body motion.
Both numerical simulations and experiments are provided to validate the performance of the proposed control approach.

\bibliographystyle{elsarticle-num}

\end{document}